\documentclass[12pt]{article}
\usepackage[right=1in,left=1in,top=1in,bottom=1in]{geometry}
\usepackage{hyperref}
\hypersetup{colorlinks, citecolor=blue, filecolor=blue, linkcolor=blue, urlcolor=blue}
\usepackage{graphicx}
\usepackage[round]{natbib}
\usepackage{amsmath,amsthm}
\usepackage{algorithm}
\usepackage{algpseudocode}
\usepackage{float}
\usepackage{amssymb}
\usepackage{caption}

\newtheorem{theorem}{Theorem}

\newtheorem{assumption}{Assumption}
\newtheorem{proposition}{Proposition}
\newtheorem{lemma}{Lemma}
\newtheorem{remark}{Remark}
\DeclareMathOperator{\Var}{Var}

\usepackage{setspace}
\usepackage{sectsty}
\sectionfont{\large}
\subsectionfont{\normalsize}
\subsubsectionfont{\normalsize}

\title{ \vspace*{-2.5cm} A Variance-Based Test for Heterogeneous Treatment Effects}

\author{Fangzhou Yu\thanks{School of Economics, University of Sydney.
\href{mailto:fangzhou.yu@sydney.edu.au}{fangzhou.yu@sydney.edu.au}}}

\date{ \vspace*{0.5cm} February, 2026} 

\begin{document}

\bgroup
\let\footnoterule\relax

\begin{singlespace}
  \maketitle

  \begin{abstract}

    This paper proposes a robust nonparametric hypothesis test for the existence of heterogeneous treatment effects. We focus on the variance of the Conditional Average Treatment Effect (CATE) as a natural omnibus parameter, where a non-zero variance implies the presence of relevant heterogeneity. Standard inference for this parameter faces a fundamental theoretical challenge. On one hand, evaluating variance components on the same sample leads to null degeneracy, where the asymptotic variance collapses to zero under the null hypothesis of homogeneity, invalidating standard Gaussian inference. On the other hand, decoupling the empirical processes via standard sample-splitting breaks the Neyman orthogonality of the doubly robust scores due to their nonlinear squared loss, which prevents the cancellation of first-order regularization biases. To resolve this challenge, we propose a novel Intra-Fold Sample-Splitting algorithm. By evaluating variance components on mutually disjoint subsamples while coupling them to identical out-of-fold nuisance estimators, our procedure achieves algebraic cancellation of the nuisance biases. We prove this restores consistency and asymptotic normality, and ensures Type I error control. Monte Carlo simulations demonstrate that the proposed test achieves superior size control relative to existing tests while maintaining high power. In an empirical application to the NSW job training program, the test detects significant heterogeneity that traditional nonparametric tests fail to uncover.

  \end{abstract}
\end{singlespace}
\thispagestyle{empty}

\clearpage
\egroup
\setcounter{page}{1}


\section{Introduction  \label{sec:intro}}

The analysis of causal effects has traditionally centered on the Average Treatment Effect (ATE), which summarizes the mean impact of a policy or intervention across an entire population. While nonparametric estimators for the ATE with valid statistical inference are now well-established under the unconfoundedness assumption \citep[e.g.,][]{robins1994estimation,chernozhukov2018double}, the ATE often masks substantial heterogeneity in individual responses. Recognizing this heterogeneity is crucial for understanding the underlying mechanisms of a treatment and for designing optimal policies that tailor interventions to specific subpopulations \citep{heckman1997making, athey2017state}.

Recent advances in causal machine learning have facilitated flexible estimation of the Conditional Average Treatment Effect (CATE) function, \(\tau(x) = \mathbb{E}[Y(1) - Y(0) | X=x]\), even in high-dimensional settings \citep[e.g.,][]{wager2018estimation, nie2021quasi}. However, obtaining valid statistical inference for the full CATE function remains a formidable challenge. The complexity of modern machine learning algorithms often precludes the use of classical empirical process theory, and the regularization bias required for estimation makes formally testing hypotheses about the shape of \(\tau(x)\) difficult. Consequently, researchers often face a trade-off between the robust inference available for the ATE and the granular, yet often unstable, characterization of the full CATE curve.

To bridge this gap, we propose a robust hypothesis test for the existence of heterogeneous treatment effects. Rather than attempting to estimate the shape of the heterogeneity immediately, we ask a preliminary question: Is the treatment effect constant across subpopulations defined by covariates? We answer this by conducting inference on a single scalar parameter, the variance of the CATE, \(\theta_0 = \Var(\tau(X))\). This parameter serves as a natural omnibus measure. If \(\theta_0 = 0\), the effects are homogeneous almost surely, and if \(\theta_0 > 0\), relevant heterogeneity exists, justifying further granular investigation.

Developing a valid test for \(\theta_0\) presents a theoretical challenge at the intersection of causal inference and machine learning. To robustly estimate this variance, one must rely on the doubly robust pseudo-outcome, which is the influence function of the ATE. However, testing the null hypothesis of homogeneity using these pseudo-outcomes introduces an impasse characterized by two problems, null degeneracy and the breakdown of Neyman orthogonality.

First, under the null hypothesis, the true parameter lies on the boundary of the parameter space, and the CATE collapses to the ATE. In this boundary case, the true variance components of the pseudo-outcome become identical, and the influence function of the standard variance difference degenerates to zero almost surely. If the test statistic is computed on a single full sample, the asymptotic variance collapses and standard Gaussian inference breaks down. The modern semiparametric resolution to such degeneracy is to decouple the empirical processes of the variance components via sample-splitting \citep{williamson2023general}.

Second, resolving null degeneracy via standard sample-splitting inadvertently causes a breakdown of Neyman orthogonality, which is a key ingredient for developing \(\sqrt{n}\)-consistent semiparametric estimators \citep{chernozhukov2018double}. By the Law of Total Variance, \(\theta_0\) is identified as the difference between the total and residual variances of the pseudo-outcome. While the pseudo-outcome itself is doubly robust, its squared loss is not, introducing \(O_P(n^{-1/4})\) first-order regularization biases into both variance components. These biases can cancel each other when both components are constructed using the exact same nuisance estimators. Standard splitting destroys this symmetry. By evaluating the components on separate folds to decouple their empirical processes, it forces the use of independently trained machine learning models, and thus, their biases fail to cancel. When the test statistic is scaled by \(\sqrt{n}\), this uncancelled residual error diverges to infinity, invalidating asymptotic inference.

To resolve this methodological impasse, a valid test must evaluate the variance components on disjoint observations to resolve degeneracy, while applying identical nuisance estimators to preserve bias cancellation. We achieve this via a novel Intra-Fold Sample-Splitting algorithm. We partition the data into \(K\) main folds and train a single set of nuisance functions out-of-fold. We randomly bisect each in-fold dataset into two mutually disjoint halves and compute the total variance exclusively on the first half and the residual variance on the second, applying the exact same out-of-fold nuisance estimators to both. This paired structure ensures a strictly positive asymptotic variance under the null and cancels out the non-orthogonal squared biases. We formally prove this procedure yields consistent and asymptotically normal estimators, and guarantees valid Type I error control on the boundary.

Our work contributes to the growing literature on testing for treatment effect heterogeneity. Existing methods largely fall into two categories, projection-based tests and distributional tests. A prominent strand of literature focuses on testing whether the projection of the CATE onto a specific set of basis functions of covariates is zero. \citet{crump2008nonparametric} propose a nonparametric test based on sieve estimation, while \citet{semenova2021debiased} develop a Double/Debiased Machine Learning (DML) inference framework for the coefficients of a linear projection of the CATE. Theoretically, these projection-based methods are consistent against general nonlinear alternatives provided the number of basis functions grows sufficiently with the sample size. However, in practice, this approach faces a fundamental trade-off between approximation error and statistical power. Testing the joint significance of a high-dimensional vector of coefficients consumes degrees of freedom, diluting statistical power. Conversely, specifying a parsimonious basis to maximize power risks inconsistency if the true heterogeneity is orthogonal to the chosen subspace. In contrast, our CATE Variance Test targets a single scalar parameter. Because \(\theta_0 = 0\) is a necessary and sufficient condition for a constant CATE, our test remains consistent against any deviation from the null without incurring the power penalty associated with high-dimensional coefficient testing.

A second strand of literature focuses on distributional effects, testing for differences in the marginal distributions or variances of potential outcomes \citep[e.g.,][]{ding2016randomization, chung2021permutation}. While observing a difference in marginal distributions of potential outcomes implies the existence of individual treatment effect heterogeneity, it is not a direct test of moderation by observables. It is possible for individual effects to vary while the conditional average effect \(\tau(x)\) remains constant. Our test specifically isolates the heterogeneity explained by covariates, making it directly relevant for policy evaluation and design. Beyond testing for treatment effect heterogeneity, our algorithm offers a generalizable framework for conducting valid hypothesis testing on nonlinear transformations of doubly robust scores.

The remainder of the paper is organized as follows. Section \ref{sec:framework} establishes the econometric framework and the identification of the target parameter via pseudo-outcomes. Section \ref{sec:tests} formalizes the theoretical tension between null degeneracy and Neyman orthogonality, introduces our Intra-Fold Sample-Split algorithm and establishes its asymptotic properties. Section \ref{sec:sim} presents Monte Carlo simulation results comparing our test to existing projection-based alternatives. Section \ref{sec:app} applies the test to empirical data from the NSW job training program, and Section \ref{sec:conclude} concludes. All proofs are collected in the Appendix.

\section{Framework and Identification \label{sec:framework}}

In this section, we define the causal parameters of interest, state the assumptions required for identification and inference, and derive the variance decomposition that forms the basis of our test statistic.

\subsection{Setup}

We follow the potential outcomes framework \citep{rubin1974estimating}. We observe a random sample of \(n\) independent and identically distributed units \(O_i = (Y_i, D_i, X_i)\) for \(i=1, \dots, n\), drawn from an unknown distribution \(P_0\). Here, \(D_i \in \{0, 1\}\) is a binary treatment indicator, \(X_i \in \mathcal{X} \subset \mathbb{R}^p\) is a vector of covariates, and \(Y_i \in \mathbb{R}\) is the observed outcome. Let \(Y_i(1)\) and \(Y_i(0)\) denote the potential outcomes under treatment and control, respectively. The observed outcome relates to the potential outcomes via the consistency condition \(Y_i = D_i Y_i(1) + (1-D_i)Y_i(0)\). The fundamental problem of causal inference is that for any unit \(i\), we observe only one of the two potential outcomes. Our primary focus is the CATE, defined as
\[
  \tau_0(x) = \mathbb{E}[Y_i(1) - Y_i(0) | X_i = x].
\]
We also define the ATE, denoted by \(\tau_{\mathrm{ATE}} = \mathbb{E}[\tau_0(X_i)]\). To facilitate identification, we define the nuisance functions \(\mu_0(d, x) = \mathbb{E}[Y_i | D_i=d, X_i=x]\) for \(d \in \{0, 1\}\) representing the conditional outcome means, and \(e_0(x) = P(D_i=1 | X_i=x)\) representing the propensity score.

We invoke the standard assumptions for causal identification in observational studies \citep{rosenbaum1983central}, alongside regularity conditions required for valid asymptotic inference.

\begin{assumption}[Unconfoundedness]
  \label{ass:unconfound}
  Conditional on covariates \(X_i\), the treatment assignment is independent of potential outcomes
  \[
    D_i \perp (Y_i(1), Y_i(0)) \mid X_i.
  \]
\end{assumption}

\begin{assumption}[Overlap]
  \label{ass:overlap}
  The propensity score is strictly bounded away from zero and one. There exists a constant \(\xi > 0\) such that
  \[
    \xi \leq e_0(x) \leq 1-\xi
  \]
  almost surely for all \(x \in \mathcal{X}\).
\end{assumption}

\begin{assumption}
  \label{ass:nondegen}
  (i) The outcome \(Y_i\) has bounded fourth moments: \(\mathbb{E}[Y_i^4] < \infty\). (ii) Non-degeneracy: the variance of the squared centered pseudo-outcome is strictly bounded away from zero. There exists a constant \(c > 0\) such that
  \[
    \Var\!\left( (\psi_0(O_i) - \tau_{\mathrm{ATE}})^2 \right) > c,
  \]
  where \(\psi_0(O_i)\) is the doubly robust pseudo-outcome defined in Equation~\eqref{eq:pseudo}.
\end{assumption}

Assumptions \ref{ass:unconfound} and \ref{ass:overlap} allow for the identification of the CATE function \(\tau_0(x) = \mu_0(1, x) - \mu_0(0, x)\). Assumption \ref{ass:nondegen}(i) ensures finite moments needed for the central limit theorem and for the empirical-process arguments underlying double machine learning. Condition (ii) directly guarantees that the asymptotic variance of our test statistic is bounded away from zero on the boundary of the parameter space, eliminating the pathological degeneracy that would otherwise arise under the null hypothesis of homogeneity.

\subsection{The Target Parameter and Hypotheses}

We investigate whether the treatment effect is constant across the population defined by \(X\). Formally, we define the CATE Variance parameter:
\[
  \theta_0 = \Var(\tau_0(X_i)).
\]
The variance serves as an omnibus measure of heterogeneity. If \(\theta_0 = 0\), the CATE is constant almost surely (i.e., \(\tau_0(x) = \tau_{\mathrm{ATE}}\) for all \(x\)). If \(\theta_0 > 0\), there exists variation in the treatment effect explained by the covariates. Accordingly, we test the null hypothesis of homogeneity against the one-sided alternative of heterogeneity:
\begin{equation}
  \label{eq:hypothesis}
  H_0: \theta_0 = 0 \quad \text{vs.} \quad H_1: \theta_0 > 0.
\end{equation}
Since \(\theta_0\) is non-negative, the null hypothesis lies on the boundary of the parameter space. We address the inferential implications of this boundary condition in Section \ref{sec:tests}.

\subsection{Identification via Pseudo-Outcomes}

A direct estimator of \(\Var(\tau_0(X))\) based on a plug-in estimate of the function \(\hat{\tau}(x)\) would suffer from first-order regularization bias, particularly when \(X\) is high-dimensional. To address this, we utilize a doubly robust pseudo-outcome, also known as the uncentered influence function for the ATE. Define the pseudo-outcome \(\psi(O_i)\) as
\begin{equation}
  \label{eq:pseudo}
  \psi(O_i) = \mu_0(1, X_i) - \mu_0(0, X_i) + \frac{D_i(Y_i - \mu_0(1, X_i))}{e_0(X_i)} - \frac{(1-D_i)(Y_i - \mu_0(0, X_i))}{1-e_0(X_i)}.
\end{equation}
This pseudo-outcome possesses two critical properties. First, it is an unbiased signal of the CATE
\[
  \mathbb{E}[\psi(O_i) | X_i] = \tau_0(X_i),
\]
which also implies \(\mathbb{E}[\psi(O_i)] = \tau_{\mathrm{ATE}}\). Second, it allows us to identify \(\theta_0\) through a variance decomposition. By the Law of Total Variance applied to \(\psi(O_i)\), we have
\[
  \Var(\psi(O_i)) = \Var(\mathbb{E}[\psi(O_i)|X_i]) + \mathbb{E}[\Var(\psi(O_i)|X_i)].
\]
Substituting the conditional expectation with \(\tau_0(X_i)\), we can rearrange this to identify the CATE variance
\[
  \theta_0 = \Var(\psi(O_i)) - \mathbb{E}[(\psi(O_i) - \tau_0(X_i))^2].
\]
Or, expressed in terms of Mean Squared Error which facilitates our estimation strategy
\begin{equation}
  \label{eq:cate_var}
  \theta_0 = \underbrace{\mathbb{E}[(\psi(O_i) - \tau_{\mathrm{ATE}})^2]}_{V_{\mathrm{tot}}} - \underbrace{\mathbb{E}[(\psi(O_i) - \tau_0(X_i))^2]}_{V_{\mathrm{res}}}.
\end{equation}

Equation \eqref{eq:cate_var} provides the identification result for our test. It expresses the CATE variance as the difference between \(V_{\mathrm{tot}}\), the MSE of the best constant predictor of the pseudo-outcome (\(\tau_{\mathrm{ATE}}\)), and \(V_{\mathrm{res}}\), the MSE of the best conditional predictor (\(\tau_0(X)\)).

\section{Test for Heterogeneous Treatment Effect\label{sec:tests}}

In this section, we develop a formal hypothesis test for the presence of heterogeneous treatment effects. Having identified the CATE variance, \(\theta_0 = \Var(\tau_0(X))\), as our target parameter in Section \ref{sec:framework}, we test the hypotheses in Equation \eqref{eq:hypothesis}. This test leverages the identification result derived in Equation \eqref{eq:cate_var}. While estimating \(\theta_0\) fits within the general framework of semiparametric inference, the null hypothesis poses a unique theoretical challenge known as null-degeneracy. Below, we derive the influence function for \(\theta_0\), analyze its properties, and detail the algorithm for the hypothesis test.

We first derive the influence function for \(\theta_0\).
\begin{proposition}[Influence Function for \(\theta_0\)]
  \label{prp:cvtif}
  Under Assumptions~\ref{ass:unconfound}--\ref{ass:nondegen}, the influence function for the CATE variance \(\theta_0\) is given by
  \[
    \phi_{\theta}(O_i) = (\psi(O_i) - \tau_{\mathrm{ATE}})^2 - (\psi(O_i) - \tau_0(X_i))^2 - \theta_0,
  \]
  where \(\psi(O_i)\) is the pseudo-outcome defined in Equation \eqref{eq:pseudo}.
\end{proposition}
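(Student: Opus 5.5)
The plan is to compute the pathwise derivative of $\theta_0$ along regular parametric submodels and show it equals $\mathbb{E}[\phi_\theta(O)\,s(O)]$, where $s$ is the submodel score. I would start with a purely algebraic rewriting of the claimed expression, because it makes the target transparent. Writing $\tau=\tau_0(X_i)$ and $\psi=\psi(O_i)$,
\[
(\psi-\tau_{\mathrm{ATE}})^2-(\psi-\tau)^2=(\tau-\tau_{\mathrm{ATE}})(2\psi-\tau-\tau_{\mathrm{ATE}})=(\tau-\tau_{\mathrm{ATE}})^2+2(\tau-\tau_{\mathrm{ATE}})(\psi-\tau).
\]
So the claim is equivalent to
\[
\phi_\theta(O_i)=(\tau_0(X_i)-\tau_{\mathrm{ATE}})^2-\theta_0+2(\tau_0(X_i)-\tau_{\mathrm{ATE}})(\psi(O_i)-\tau_0(X_i)).
\]
This is the familiar form: a "plug-in" term plus a first-order correction weighted by the derivative of the functional with respect to $\tau_0$.

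Next I would decompose $\theta_0=\mathbb{E}[\tau_0(X)^2]-\tau_{\mathrm{ATE}}^2$ and use the chain rule for influence functions. Let $\{P_t\}$ be a regular submodel with score $s=s_X+s_{D\mid X}+s_{Y\mid D,X}$. The key building block is the derivative of $\tau_t(x)$. Since
\[
\partial_t\mu_t(d,x)\big|_{t=0}=\mathbb{E}[(Y-\mu_0(d,x))\,s_{Y\mid D,X}\mid D=d,X=x],
\]
and the propensity does not enter $\tau$, a reweighting by $D/e_0(X)$ and $(1-D)/(1-e_0(X))$ gives, for any square-integrable $h(X)$,
\[
\partial_t\,\mathbb{E}[h(X)\tau_t(X)]\big|_{t=0}=\mathbb{E}\!\left[h(X)(\psi(O)-\tau_0(X))\,s(O)\right].
\]
Here I use that $\psi-\tau_0$ has conditional mean zero given $X$, and that the $D\mid X$ part of the score contributes zero because $\mathbb{E}[(Y-\mu_0(D,X))\mid D,X]=0$.

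I would then differentiate the two pieces of $\theta_0$.

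1. For $\mathbb{E}[\tau_t(X)^2]$, differentiating both the $X$-marginal and $\tau_t$ yields the influence function $\tau_0^2-\mathbb{E}[\tau_0^2]+2\tau_0(\psi-\tau_0)$.
2. For $\tau_{\mathrm{ATE}}^2$, the influence function of $\tau_{\mathrm{ATE}}$ is the standard $\psi-\tau_{\mathrm{ATE}}$. The delta method then gives $2\tau_{\mathrm{ATE}}(\psi-\tau_{\mathrm{ATE}})=2\tau_{\mathrm{ATE}}(\tau_0-\tau_{\mathrm{ATE}})+2\tau_{\mathrm{ATE}}(\psi-\tau_0)$.

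Subtracting the second from the first and collecting terms should produce exactly the displayed form above, and hence the statement of Proposition~\ref{prp:cvtif}.

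Finally, I would verify regularity. $\phi_\theta$ has mean zero and finite variance: $\mathbb{E}[Y^4]<\infty$ and overlap (Assumptions~\ref{ass:overlap}--\ref{ass:nondegen}) give $\psi\in L^4$, so the squared terms lie in $L^2$. Because the model is nonparametric, the tangent space is all of $L^2_0$, so this gradient is the (efficient) influence function. I expect the main obstacle to be the derivative of $\tau_t$: cleanly justifying the reweighting identity and the interchange of differentiation and expectation under only fourth moments. I would handle this by citing the standard ATE efficient-influence-function calculation of \citet{robins1994estimation}, applied with the weight $h(X)$ inserted, rather than redoing it from scratch.
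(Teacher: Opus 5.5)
Your proof is correct, but it takes a different route from the paper's.

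\textbf{The paper's route.} The paper differentiates the identification formula $V_{\mathrm{tot}}(P)-V_{\mathrm{res}}(P)$ directly, with $\psi_P$ depending on $(e_P,\mu_P)$. It tracks three variations: the measure variation, the nuisance variation $\partial_t\psi_{P_t}$, and the centering variations. The centering terms vanish because $\mathbb{E}[\psi-\tau_{\mathrm{ATE}}]=0$ and $\mathbb{E}[\psi-\tau_0\mid X]=0$. The surviving nuisance term $2\,\mathbb{E}[(\tau_0(X)-\tau_{\mathrm{ATE}})\,\partial_t\psi_{P_t}]$ vanishes because $\mathbb{E}[\partial_t\psi_{P_t}\mid X]=0$. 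Only the score term $\mathbb{E}[\{(\psi-\tau_{\mathrm{ATE}})^2-(\psi-\tau_0)^2\}S]$ remains.

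\textbf{Your route.} You write $\theta_0=\mathbb{E}[\tau_0(X)^2]-\tau_{\mathrm{ATE}}^2$, which is a functional of $(P_X,\mu_0)$ alone. You build its influence function from the weighted-ATE derivative $\partial_t\mathbb{E}[h(X)\tau_t(X)]=\mathbb{E}[h(X)(\psi-\tau_0)s]$ together with the delta method. That identity must be read with the $X$-marginal held fixed, as your step 1 implicitly does.

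\textbf{What your route buys.} The propensity score is never differentiated; it appears only as a reweighting device, and the whole derivation rests on one standard lemma. Your intermediate form $(\tau_0-\tau_{\mathrm{ATE}})^2-\theta_0+2(\tau_0-\tau_{\mathrm{ATE}})(\psi-\tau_0)$ also makes two facts visible at a glance:
\begin{itemize}
\item $\psi$ enters only linearly, which is exactly why the $\hat\psi_k^2$ terms can cancel in Lemma~\ref{lem:A1};
\item $\phi_\theta\equiv 0$ identically under $H_0$, which is the null degeneracy discussed in Section~\ref{sec:impasse}.
\end{itemize}

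\textbf{What the paper's route buys.} It works on the very representation the estimator is built from. It therefore exhibits the mechanism the Intra-Fold split is designed to preserve. Each of $V_{\mathrm{tot}}$ and $V_{\mathrm{res}}$ has a non-vanishing nuisance derivative on its own. Orthogonality is recovered only in their difference, when both are constructed from the same $\psi$.

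Your regularity remarks are also sound: overlap plus $\mathbb{E}[Y^4]<\infty$ give $\psi\in L^4$, and the model is locally nonparametric.
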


Proof. See Appendix.

Based on Proposition \ref{prp:cvtif}, a standard "one-step" efficient estimator can be constructed by solving the empirical equation \(n^{-1} \sum \hat\phi_{\theta}(O_i) = 0\), where \(\hat\phi_{\theta}(O_i)\) is obtained by plugging in the estimated nuisance parameters. Under the alternative hypothesis, \(H_1: \theta_0 > 0\), standard semiparametric theory guarantees that such an estimator is \(\sqrt{n}\)-consistent and asymptotically normal
\[
  \sqrt{n}(\hat\theta - \theta_0) \xrightarrow{d} \mathcal{N}(0, \Var(\phi_\theta)),
\]
provided that the nuisance parameters converge at sufficiently fast rates, typically \(n^{-1/4}\) \citep{chernozhukov2018double}. This allows us to employ modern machine learning methods to estimate \(\mu_0\) and \(e_0\), and plug in the pseudo-outcome \(\psi(O_i)\). Constructing \(\hat\theta\) also requires feasible estimators for \(\tau_{\mathrm{ATE}}\) and \(\tau_0(X)\), which we review in the next section.

\subsection{Estimation of ATE and CATE}

Existing strategies for estimating \(\tau_0(x)\) and \(\tau_{\mathrm{ATE}}\) largely fall into two categories, the T-learner and the DR-learner. The T-learner estimates the conditional means \(\mu_0(1, x)\) and \(\mu_0(0, x)\) and computes their differences to obtain an estimator for the treatment effect. For the ATE, a T-learner is \(n^{-1}\sum_{i=1}^n (\hat\mu(1, X_i) - \hat\mu(0, X_i))\), and for the CATE, a T-learner is \(\hat\mu(1, X_i) - \hat\mu(0, X_i)\). By the triangle inequality, the \(L_2\) error of the T-learner is bounded by the errors of the baseline outcome models. Therefore, provided the nuisance estimators \(\hat{\mu}(1, \cdot)\) and \(\hat{\mu}(0, \cdot)\) satisfy the \(o_P(n^{-1/4})\) rate, we can show that the T-learner also satisfies this rate and can be applied in our algorithm. However, as noted by \citet{kunzel2019metalearners}, T-learners can suffer from regularization bias, particularly when the CATE function is sparser than the baseline outcome functions or when there is poor overlap between treatment groups.

The DR-learner, on the other hand, treats ATE and CATE estimation as a direct regression of the pseudo-outcome \(\psi(O_i)\) on the covariates. The DR-learner for the ATE is \(n^{-1}\sum_{i=1}^n \hat\psi(O_i)\), which is also known as the Augmented Inverse Propensity Weighting (AIPW) estimator. Because the pseudo-outcome is Neyman orthogonal, \(\hat{\tau}_{\mathrm{ATE}}\) is \(\sqrt{n}\)-consistent provided the product of the \(L_2\) estimation errors for the propensity score and outcome mean vanishes at an \(o_P(n^{-1/2})\) rate \citep[e.g.,][]{robins1994estimation, chernozhukov2018double}. For the CATE, it is the minimizer of the mean squared error \(\sum_{i=1}^n (\hat{\psi}(O_i) - f(X_i))^2\). Its estimation error is bounded by the oracle smoothing error of the CATE plus the product of the nuisance errors \(\|\hat{e} - e_0\|_{P,2} \times \|\hat{\mu} - \mu_0\|_{P,2}\) \citep{kennedy2023towards}, where \(\|\cdot\|_{P,2}\) denotes the \(L_2(P_0)\) norm. This imparts a "double robustness of rates." Even if the baseline outcome model \(\hat{\mu}\) converges at a rate slower than \(n^{-1/4}\) due to complex confounding, the DR-learner can still achieve the requisite \(o_P(n^{-1/4})\) rate, provided the propensity score converges sufficiently fast and the true CATE is sufficiently smooth.

In this paper, we adopt the DR-learner for both ATE and CATE estimation. It often yields more stable estimates than differencing two regression functions, and the convergence rate depends on the product of nuisance errors, making it robust to misspecification of the nuisance models. However, simply plugging the nuisance estimators into a standard full-sample or cross-fitting empirical analogue of \(\theta_0\) fails to yield valid inference. We formalize this fundamental breakdown of Neyman orthogonality in the next section.

\subsection{Null Degeneracy and the Breakdown of Orthogonality\label{sec:impasse}}

To develop a valid semiparametric test based on the variance of CATE, the first hurdle is the problem of null degeneracy. Under the null hypothesis of homogeneity \(H_0: \theta_0 = 0\), the true CATE is constant almost surely, i.e., \(\tau_0(X_i) = \tau_{\mathrm{ATE}}\). Consequently, the true total and residual losses are identical, and their corresponding influence functions coincide perfectly. If one computes the empirical analogues of \(V_{\mathrm{tot}}\) and \(V_{\mathrm{res}}\) using the same sample of observations, the empirical processes become perfectly correlated, and the asymptotic variance of their difference collapses to zero. This degeneracy violates the regularity conditions required for standard Gaussian approximations and destroys the size calibration of the test: rather than attaining its nominal level, the same-sample statistic degenerates and becomes severely conservative (Appendix \ref{sec:appnaive}). \citet{williamson2023general} suggest that this degeneracy can be resolved by evaluating the components on disjoint subsets of the data via sample-splitting.

However, resolving null degeneracy via standard sample-splitting breaks down the Neyman orthogonality of the influence function in Proposition \ref{prp:cvtif}. To formalize this, consider the pathwise Gâteaux derivative of the expected squared residual loss, \(\mathbb{E}[(\psi - \tau_0(X))^2]\), with respect to the propensity score \(e(x)\). The expected first-order bias depends on the cross-term conditional on \(X_i\),
\[
  \mathbb{E}\left[ 2(\psi(O_i) - \tau_0(X_i)) \frac{\partial \psi}{\partial e}(O_i) \bigg| X_i \right].
\]
Substituting the residual error
\[
  \psi(O_i) - \tau_0(X_i) = \frac{D_i(Y_i-\mu_0(1, X_i))}{e_0(X_i)} - \frac{(1-D_i)(Y_i-\mu_0(0, X_i))}{1-e_0(X_i)}
\]
and its partial derivative 
\[
  \frac{\partial \psi}{\partial e}(O_i) = - \frac{D_i(Y_i-\mu_0(1, X_i))}{e_0(X_i)^2} - \frac{(1-D_i)(Y_i-\mu_0(0, X_i))}{(1-e_0(X_i))^2},
\]
the cross-products strictly vanish since the treatment indicator satisfies \(D_i(1-D_i) = 0\). Using the unconfoundedness assumption to replace the expected squared residual outcomes with the true conditional variances \(\sigma_1^2(X_i)\) and \(\sigma_0^2(X_i)\),
\[ 
  \mathbb{E}\left[ 2(\psi(O_i) - \tau_0(X_i)) \frac{\partial \psi}{\partial e}(O_i) \bigg| X_i \right] = 2 \left( - \frac{\sigma_1^2(X_i)}{e_0(X_i)^2} + \frac{\sigma_0^2(X_i)}{(1-e_0(X_i))^2} \right) \equiv g(X_i).
\]
Crucially, this derivative \(g(X_i)\) is generally non-zero. Because this derivative does not vanish, the squared pseudo-outcome is not Neyman orthogonal. Any plug-in estimator for the residual variance \(V_{\mathrm{res}}\) is therefore contaminated by a first-order regularization bias of order \(O_P(\|\hat{e} - e_0\|_{P,2})\). The estimator for the total variance, \(V_{\mathrm{tot}} = \mathbb{E}[(\psi - \tau_{\mathrm{ATE}})^2]\), suffers from the same non-orthogonal bias \(g(X_i)\).

The target parameter \(\theta_0\) remains \(\sqrt{n}\)-consistent only because of an exact algebraic cancellation. If both variance components are evaluated using the exact same nuisance estimators, their respective first-order biases \(g(X_i)\) are mathematically identical and cancel one another when taking the difference \(V_{\mathrm{tot}} - V_{\mathrm{res}}\). Standard sample-splitting, which evaluates the two variance components on different data folds, structurally destroys this delicate symmetry by forcing the use of independently trained machine learning nuisance estimators (e.g., evaluating \(V_{\mathrm{tot}}\) with an out-of-fold propensity score \(\hat{e}_{\mathrm{odd}}\) and \(V_{\mathrm{res}}\) with \(\hat{e}_{\mathrm{even}}\)). Because these independently trained nuisance estimators differ in finite samples, their induced non-orthogonal biases no longer match. The uncancelled first-order bias in the split-sample estimator becomes approximately 
\[
  \text{Bias}(\hat{\theta}_{\mathrm{split}}) \approx \int g(X) \big( \hat{e}_{\mathrm{odd}}(X) - \hat{e}_{\mathrm{even}}(X) \big) dP_0(X).
\]
Standard rates only guarantee that independently trained nuisance estimators differ by \(o_P(n^{-1/4})\), so the \(\sqrt{n}\)-scaled test statistic inherits a residual bias of order \(o_P(n^{1/4})\) — a quantity that need not converge to zero, invalidating asymptotic inference.

\subsection{The Intra-Fold Sample-Split Algorithm}

To resolve the methodological impasse formalized in Section \ref{sec:impasse}, a valid testing procedure must simultaneously evaluate the total and residual variance components on strictly disjoint sets of observations and construct these components using the same nuisance estimators to preserve the algebraic cancellation of the squared pseudo-outcomes, thereby restoring Neyman orthogonality.

We achieve these requirements via a novel Intra-Fold Sample-Split (IF-SS) algorithm, detailed in Algorithm \ref{alg:ss_cvt}. Instead of splitting the evaluation of the variance components across entirely different main folds, our algorithm introduces an internal data partition. We first partition the data into \(K\) main folds and train a single set of nuisance functions on the out-of-fold data. We randomly bisect each in-fold evaluation dataset into two mutually disjoint halves. We compute the total variance exclusively on the first half and the residual variance exclusively on the second half, applying the identically trained out-of-fold nuisance estimators to both.

\begin{algorithm}
  \caption{Intra-Fold Sample-Split CATE Variance Test (IF-SS-CVT)}
  \label{alg:ss_cvt}
  \begin{algorithmic}[1]
    \Require Data \(\{(Y_i, D_i, X_i)\}_{i=1}^n\), Number of folds \(K \ge 2\), Significance level \(\alpha\).
    \Ensure Test statistic \(Z_\theta\) and rejection decision.
    \Statex \textbf{1. Partitioning}
    \State Randomly partition indices \(\{1, \dots, n\}\) into \(K\) disjoint main folds \(\mathcal{I}_1, \dots, \mathcal{I}_K\). Let \(n_k = |\mathcal{I}_k|\).
    \State For each fold \(k\), randomly sub-split the evaluation fold \(\mathcal{I}_k\) into two mutually disjoint subsets, \(\mathcal{I}_{k,\mathrm{tot}}\), and \(\mathcal{I}_{k,\mathrm{res}}\), of equal size \(m_k = n_k / 2\).
    \Statex \textbf{2. Nuisance Training and Paired Evaluation}
    \For{\(k = 1, \dots, K\)}
    \State Train estimators \(\hat{\eta}_k = (\hat{e}_k, \hat{\mu}_k)\) and learners \(\hat{\tau}_k\), \(\hat{\tau}_{\mathrm{ATE},k}\) on \(\mathcal{I}_{-k} = \{1, \dots, n\} \setminus \mathcal{I}_k\).
    \For{each unit \(i \in \mathcal{I}_k\)}
    \State \(\hat{\psi}_{i,k} \gets \hat{\mu}_k(1, X_i) - \hat{\mu}_k(0, X_i) + \frac{D_i(Y_i - \hat{\mu}_k(1, X_i))}{\hat{e}_k(X_i)} - \frac{(1-D_i)(Y_i - \hat{\mu}_k(0, X_i))}{1-\hat{e}_k(X_i)}\)
    \EndFor
    \State \textit{Evaluate Total Variance strictly on \(\mathcal{I}_{k,\mathrm{tot}}\):}
    \State \(\hat{V}_{\mathrm{tot},k} \gets \frac{1}{m_k} \sum_{i \in \mathcal{I}_{k,\mathrm{tot}}} (\hat{\psi}_{i,k} - \hat{\tau}_{\mathrm{ATE},k})^2\)
    \State \(\hat{\sigma}^2_{\mathrm{tot},k} \gets \frac{1}{m_k - 1} \sum_{i \in \mathcal{I}_{k,\mathrm{tot}}} \left( (\hat{\psi}_{i,k} - \hat{\tau}_{\mathrm{ATE},k})^2 - \hat{V}_{\mathrm{tot},k} \right)^2\)
    \State \textit{Evaluate Residual Variance strictly on \(\mathcal{I}_{k,\mathrm{res}}\), using the same \(\hat{\psi}_{i,k}\):}
    \State \(\hat{V}_{\mathrm{res},k} \gets \frac{1}{m_k} \sum_{i \in \mathcal{I}_{k,\mathrm{res}}} (\hat{\psi}_{i,k} - \hat{\tau}_k(X_i))^2\)
    \State \(\hat{\sigma}^2_{\mathrm{res},k} \gets \frac{1}{m_k - 1} \sum_{i \in \mathcal{I}_{k,\mathrm{res}}} \left( (\hat{\psi}_{i,k} - \hat{\tau}_k(X_i))^2 - \hat{V}_{\mathrm{res},k} \right)^2\)
    \EndFor
    \Statex \textbf{3. Aggregation and Inference}
    \State Compute split-sample variance estimate: \(\hat{\theta}_{\mathrm{split}} \gets \frac{1}{K} \sum_{k=1}^K (\hat{V}_{\mathrm{tot},k} - \hat{V}_{\mathrm{res},k})\)
    \State Compute standard error: \(\widehat{SE} \gets \sqrt{ \frac{1}{K^2} \sum_{k=1}^K \left( \frac{\hat{\sigma}^2_{\mathrm{tot},k}}{m_k} + \frac{\hat{\sigma}^2_{\mathrm{res},k}}{m_k} \right) }\)
    \State Compute standardized test statistic: \(Z_\theta \gets \hat{\theta}_{\mathrm{split}} / \widehat{SE}\)
    \State \Return \textbf{Reject} \(H_0\) if \(Z_\theta > z_{1-\alpha}\), otherwise \textbf{Fail to reject}.
  \end{algorithmic}
\end{algorithm}

To establish the asymptotic validity of Algorithm \ref{alg:ss_cvt}, we impose regularity conditions on the estimators used. We maintain Assumptions~\ref{ass:unconfound}--\ref{ass:nondegen} from Section \ref{sec:framework} and further introduce the following regularity conditions on nuisance estimators.

\begin{assumption}
  \label{ass:nuisance_regularity}
  (i) Convergence Rates:
  \begin{align*}
    \|\hat{e}_k - e_0\|_{P,2} = o_P(n^{-1/4}) \quad \text{and} \quad \|\hat{\mu}_k - \mu_0\|_{P,2} = o_P(n^{-1/4}), \\
    \|\hat{\tau}_k - \tau_0\|_{P,2} = o_P(n^{-1/4}) \quad \text{and} \quad |\hat{\tau}_{\mathrm{ATE},k} - \tau_{\mathrm{ATE}}| = O_P(n^{-1/2}).
  \end{align*}

  (ii) Uniform Boundedness: There exist constants \(\xi > 0\) and \(C < \infty\) such that with probability approaching 1, \(\hat{e}_k(X) \in [\xi, 1-\xi]\) and \(\max\!\big(|\hat{\mu}_k(d, X)|, |\hat{\tau}_k(X)|, |\mu_0(d, X)|, |\tau_0(X)|\big) \le C\) almost surely.
\end{assumption}

Because Algorithm \ref{alg:ss_cvt} algebraically cancels the non-orthogonal bias, the only remaining estimation errors depend strictly on the doubly robust linear pseudo-outcome terms and the Mean Squared Error of the CATE estimator itself (\(\|\hat{\tau}_k - \tau_0\|_{P,2}^2\)). Provided Assumption \ref{ass:nuisance_regularity} holds, these remaining errors rigorously vanish at an \(o_P(n^{-1/2})\) rate. We formalize the asymptotic validity of this test in Theorem \ref{thm:ss_cvt}.

\begin{theorem}[Asymptotic Validity of IF-SS-CVT]
  \label{thm:ss_cvt}
  Suppose Assumptions~\ref{ass:unconfound}--\ref{ass:nuisance_regularity} hold. Let \(Z_{\theta}\) be the standardized test statistic computed via Algorithm \ref{alg:ss_cvt} with a fixed number of folds \(K \ge 2\). As \(n \rightarrow \infty\), under both the null hypothesis \(H_0: \theta_0 = 0\) and the alternative \(H_1: \theta_0 > 0\), the standardized estimator converges to a standard normal distribution
  \[
    \frac{\hat{\theta}_{\mathrm{split}} - \theta_0}{\widehat{SE}} \xrightarrow{d} \mathcal{N}(0, 1).
  \]
  Consequently, under the null hypothesis \(H_0: \theta_0 = 0\), the test controls the Type I error rate at level \(\alpha\)
  \[
    \lim_{n \to \infty} P(Z_\theta > z_{1-\alpha} \mid H_0) = \alpha.
  \]
  Under the alternative hypothesis \(H_1: \theta_0 > 0\), the test is consistent against any fixed alternative
  \[
    \lim_{n \to \infty} P(Z_\theta > z_{1-\alpha} \mid H_1) = 1.
  \]
\end{theorem}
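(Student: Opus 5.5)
Fix a fold \(k\) and condition on the training data \(\mathcal{I}_{-k}\), so that \(\hat\eta_k, \hat\tau_k, \hat\tau_{\mathrm{ATE},k}\) are fixed functions and the observations in \(\mathcal{I}_{k,\mathrm{tot}}\) and \(\mathcal{I}_{k,\mathrm{res}}\) are i.i.d.\ draws independent of them. Write \(\hat\psi = \psi + \Delta\), where \(\Delta = \hat\psi - \psi\) is the nuisance perturbation of the pseudo-outcome.

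The plan is to expand each component as follows:
\[
\hat V_{\mathrm{tot},k} - V_{\mathrm{tot}} = (\mathbb{P}_{k,\mathrm{tot}} - P)\big[(\psi-\tau_{\mathrm{ATE}})^2\big] + P\big[2(\psi-\tau_{\mathrm{ATE}})\Delta\big] + R_{\mathrm{tot},k},
\]
and analogously for the residual part with \(\tau_0(X)\) in place of \(\tau_{\mathrm{ATE}}\). The remainders \(R\) collect four kinds of terms:
\begin{itemize}
\item empirical-process fluctuations of the differences \((\hat\psi - \hat\tau_{\mathrm{ATE},k})^2 - (\psi-\tau_{\mathrm{ATE}})^2\);
\item \(P\Delta^2\);
\item terms in \(\hat\tau_{\mathrm{ATE},k}-\tau_{\mathrm{ATE}}\), namely the squared term, which is \(O_P(n^{-1})\), and a cross term with \(P(\psi + \Delta - \tau_{\mathrm{ATE}})\);
\item terms in \(\hat\tau_k-\tau_0\), namely \(\|\hat\tau_k-\tau_0\|^2 = o_P(n^{-1/2})\) and the cross term \(P[(\psi+\Delta-\tau_0)(\hat\tau_k-\tau_0)] = P[(\tau_0+P(\Delta|X)-\tau_0)(\hat\tau_k-\tau_0)]\).
\end{itemize}
The last cross term is bounded by \(\|P(\Delta|X)\|\,\|\hat\tau_k-\tau_0\|\). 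Here \(P(\Delta\mid X)\) is exactly the doubly robust product bias \((\hat e-e_0)(\hat\mu-\mu_0)/\hat e\), so \(\|P(\Delta\mid X)\| = o_P(n^{-1/2})\) by Assumption~\ref{ass:nuisance_regularity}(i) and overlap (ii).

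The empirical-process terms are \(o_P(m_k^{-1/2})\) by Chebyshev's inequality conditional on \(\mathcal{I}_{-k}\). Their conditional variance is bounded by the \(L_2\) norm of \(\hat\psi-\psi\) times fourth-moment quantities, which vanishes by the rates, the uniform boundedness in Assumption~\ref{ass:nuisance_regularity}(ii), and \(\mathbb{E}Y^4<\infty\). The term \(P\Delta^2\) is handled in the same way as in the discussion of Section~\ref{sec:impasse}.

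The key step is to show that the first-order, non-orthogonal drift terms \(P[2(\psi-\tau_{\mathrm{ATE}})\Delta]\) and \(P[2(\psi-\tau_0)\Delta]\) coincide up to \(o_P(n^{-1/2})\). This is where the paired design enters: both use the same \(\hat\eta_k\), hence the same \(\Delta\), and both are population integrals, so it does not matter that they are evaluated on disjoint halves. Their difference is \(2P[(\tau_0(X)-\tau_{\mathrm{ATE}})\Delta] = 2P[(\tau_0-\tau_{\mathrm{ATE}})P(\Delta\mid X)]\). This is \(O(1)\cdot o_P(n^{-1/2})\) by Cauchy--Schwarz with the product-rate bias, and it vanishes identically under \(H_0\). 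Hence
\[
\hat V_{\mathrm{tot},k}-\hat V_{\mathrm{res},k}-\theta_0 = (\mathbb{P}_{k,\mathrm{tot}}-P)[(\psi-\tau_{\mathrm{ATE}})^2] - (\mathbb{P}_{k,\mathrm{res}}-P)[(\psi-\tau_0)^2] + o_P(n^{-1/2}).
\]
I expect the main obstacle to be the \(\Delta^2\)-type terms. A literal bound gives \(P\Delta^2 = O_P(\|\hat e-e_0\|^2+\|\hat\mu-\mu_0\|^2) = o_P(n^{-1/2})\). These terms also do not cancel exactly, and the linear-in-\(\hat e\) part of the cross terms does not vanish individually (\(g\neq0\)). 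So I would check carefully that the second-order expansion of \(P[(\psi+\Delta-c)^2]\) in \(\Delta\) closes with only the common linear term, and that the cross term with \(P(\psi+\Delta-\tau_{\mathrm{ATE}}) = P\Delta\) is \(O_P(n^{-1/2})\cdot o_P(n^{-1/2})\).

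Averaging over \(k\) gives a sum of \(2K\) independent centered sample means on disjoint sets. Because the two halves are disjoint, the variance is \(K^{-2}\sum_k (\sigma^2_{\mathrm{tot}}+\sigma^2_{\mathrm{res}})/m_k\). Under \(H_0\) both \(\sigma^2\) equal \(\Var((\psi-\tau_{\mathrm{ATE}})^2)>c\) by Assumption~\ref{ass:nondegen}(ii), so there is no degeneracy. The Lindeberg CLT with fourth moments then yields asymptotic normality at rate \(\sqrt n\). Next, I would show \(\hat\sigma^2_{\mathrm{tot},k}\to\sigma^2_{\mathrm{tot}}\) and \(\hat\sigma^2_{\mathrm{res},k}\to\sigma^2_{\mathrm{res}}\) in probability. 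This follows by the same conditional argument, using \(L_2\) convergence of the estimated squared scores; if fourth moments of \(\psi^2\) are needed, I would use truncation or uniform integrability. Consistency of the variance estimates gives \(\sqrt n\,\widehat{SE}\to\) a positive constant, and Slutsky's lemma delivers the stated \(\mathcal{N}(0,1)\) limit.

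Type I error control follows by setting \(\theta_0=0\). For consistency under \(H_1\), write \(Z_\theta = (\hat\theta_{\mathrm{split}}-\theta_0)/\widehat{SE} + \theta_0/\widehat{SE}\). The first term is \(O_P(1)\), and the second diverges to \(+\infty\) at rate \(\sqrt n\).
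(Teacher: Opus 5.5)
Your argument follows the paper's route. You condition on $\mathcal{I}_{-k}$ and note that, because the same $\hat\psi_k$ enters both population integrals, the non-orthogonal linear terms differ only by $2P_0[(\tau_0-\tau_{\mathrm{ATE}})\,P_0(\Delta\mid X)]$. You then bound the product-rate, $\|\hat\tau_k-\tau_0\|_{P,2}^2$ and $\hat\tau_{\mathrm{ATE},k}$ terms; apart from $P_0\Delta^2$, these are exactly the four remainders of $B_k$ in Lemma~\ref{lem:A1}. The empirical-process differences go by conditional Chebyshev, and you finish with the CLT over the disjoint halves, variance consistency, and Slutsky.

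One step, however, is misdiagnosed, and the fallback you propose for it would not hold under the stated assumptions. The $P_0\Delta^2$ terms you call the main obstacle cancel exactly. The expansion $P_0[(\psi+\Delta-c)^2]=P_0[(\psi-c)^2]+2P_0[(\psi-c)\Delta]+P_0\Delta^2$ is exact, and the $\Delta^2$ coefficient does not depend on $c$. So $P_0\Delta^2$ enters the total and residual parts identically and drops out of the difference. This is the paper's device of writing $(c-x)^2-(c-y)^2=2c(y-x)+x^2-y^2$ with $c=\hat\psi_k$, which removes $\hat\psi_k^2$ before any bounding.

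Your ``literal bound'' $P_0\Delta^2=O_P(\|\hat e_k-e_0\|_{P,2}^2+\|\hat\mu_k-\mu_0\|_{P,2}^2)$, by contrast, needs bounded conditional variances $\sigma_d^2(X)$. Assumption~\ref{ass:nondegen}(i) gives only $\mathbb{E}[Y^4]<\infty$, and then the bound can fail. For example, take $X\sim U(0,1)$, $\sigma_1^2(x)=x^{-1/4}$ with Gaussian errors, and $\hat e_k-e_0=0.1\cdot 1\{X<n^{-0.6}\}$. Then $\|\hat e_k-e_0\|_{P,2}\asymp n^{-0.3}=o(n^{-1/4})$, but $P_0[\sigma_1^2(\hat e_k-e_0)^2]\asymp n^{-0.45}$. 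So the proof must use the exact cancellation, not the individual bound.

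A smaller point concerns norms. The product-rate bias $P_0(\Delta\mid X)$ is $o_P(n^{-1/2})$ in $L_1(P_0)$, by Cauchy--Schwarz on $(\hat e_k-e_0)(\hat\mu_k-\mu_0)$. In $L_2(P_0)$ it is only $o_P(n^{-1/4})$, via $|\hat\mu_k-\mu_0|\le 2C$ as in the paper. Both of your uses still work if you pair norms correctly:
\begin{itemize}
\item For the first-order difference, use $\|\tau_0-\tau_{\mathrm{ATE}}\|_\infty\le 2C$ against the $L_1$ bound. An $L_2\times L_2$ Cauchy--Schwarz there would give only $o_P(n^{-1/4})$.
\item For the cross term with $\hat\tau_k-\tau_0$, use $L_2\times L_2$, giving $o_P(n^{-1/4})\cdot o_P(n^{-1/4})=o_P(n^{-1/2})$.
\end{itemize}
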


Proof. See Appendix.

\section{Simulation \label{sec:sim}}

We evaluate the performance of the proposed test using Monte Carlo simulations. In all designs, we generate \(n \in \{250, 500, 1000, 2000\}\) independent and identically distributed observations \(O_i = (Y_i, D_i, X_i)\) where \(X_i \in \mathbb{R}^p\). The outcome follows a common structural model
\[
  Y_i = \mu_0(X_i) + D_i \cdot \tau(X_i) + \varepsilon_i,
\]
where \(\mu_0(x)\) is the baseline outcome function, \(\tau(x)\) is the CATE, and \(\varepsilon_i \sim N(0, 1)\). The covariates are drawn from a multivariate normal distribution \(X_i \sim N(0, \Sigma)\). The treatment assignment \(D_i\) follows a Bernoulli distribution conditional on \(X_i\) with propensity score \(e(x) = (1 + \exp(-x'\alpha))^{-1}\).

We adopt a sparse setting with \(p = 50\) and uncorrelated covariates, \(\Sigma = I_p\). The propensity score depends on the first three covariates, with \(\alpha = (0.2, 0.2, 0.2, 0, \dots, 0)' \in \mathbb{R}^p\). The baseline outcome is a sparse linear function of the first five covariates,
\[
  \mu_0(x) = x_1 + 0.5 x_2 + 0.5 x_3 + 0.3 x_4 + 0.3 x_5.
\]

We examine four specifications of the CATE function \(\tau(x) = \mu(1, x) - \mu(0, x)\):

\begin{enumerate}
  \item \textbf{Constant CATE (Null):} The treatment effect is constant, \(\tau(x) = 1\).
        
  \item \textbf{Linear CATE:} The treatment effect is linear in the first two covariates,
        \[
          \tau(x) = 2x_1 + x_2.
        \]
        
  \item \textbf{Kinked CATE:} The treatment effect is piecewise linear with a kink at zero,
        \[
          \tau(x) = 4 \max(x_1, 0) + 2 \max(x_2, 0).
        \]
        
  \item \textbf{Nonlinear CATE:} The treatment effect is a smooth nonlinear function,
        \[
          \tau(x) = 3\Big(\exp\!\big(\tfrac{x_1}{2}\big) + \exp\!\big(\tfrac{x_2}{2}\big) - 2\exp\!\big(\tfrac{1}{8}\big)\Big).
        \]
\end{enumerate}

Figure~\ref{fig:simvis} provides visualizations of the data generating processes through the scatter plots of \(Y_i\) against \(X_{1i}\), alongside the true conditional mean functions \(\mu(1, x_1)\) and \(\mu(0, x_1)\) evaluated at the mean of all other covariates. The models are designed to reflect qualitatively different patterns of treatment effect heterogeneity. The constant CATE model falls under the null of Equation~\eqref{eq:hypothesis}, while the other three models fall under the alternative. The linear and nonlinear models have a zero ATE by construction, so conventional ATE-targeted approaches such as OLS or IPW would fail to detect the existence of treatment effects.

We implement our proposed Algorithm \ref{alg:ss_cvt} using \(K=5\) folds. We estimate the nuisance parameters and the DR-learner for the CATE function using two machine learning algorithms: Lasso and XGBoost\footnote{We use the \texttt{glmnet} R package for Lasso and the \texttt{xgboost} package for XGBoost.}. To demonstrate the theoretical necessity of our IF-SS structure, we introduce a Naive DML benchmark. This benchmark utilizes standard DML cross-fitting but omits our internal sample-splitting step. Specifically, for each fold \(k\), it computes both the total variance \(\hat{V}_{\mathrm{tot},k}\) and the residual variance \(\hat{V}_{\mathrm{res},k}\) on the entire evaluation fold \(\mathcal{I}_k\) using the identically trained nuisance estimators \(\hat{\eta}_k\). The Naive DML benchmark uses XGBoost for nuisance estimation. While this naive approach preserves Neyman orthogonality, it fails to solve the null degeneracy problem. Under the null hypothesis, the influence function \(\phi_i = (\hat{\psi}_i - \hat{\tau}_{\mathrm{ATE}})^2 - (\hat{\psi}_i - \hat{\tau}(X_i))^2\) converges to zero for all \(i\), so that both the point estimate and the estimated standard error degenerate. Because the two variance components are evaluated on the same observations, the flexible CATE learner contributes a spurious dispersion that biases \(\hat{\theta}\) downward, and dividing this negative bias by a standard error of even smaller order drives the standardized statistic to \(-\infty\); the rejection probability of the one-sided test converges to zero (Proposition \ref{prp:naive}). The naive test is therefore severely undersized rather than unreliable in an unpredictable direction. We characterize this conservative degeneracy formally in Appendix \ref{sec:appnaive}.

\begin{figure}[H]
    \begin{center}
    \caption{Sketches of the Data Generating Processes}\label{fig:simvis}
    \includegraphics[width=6.0in]{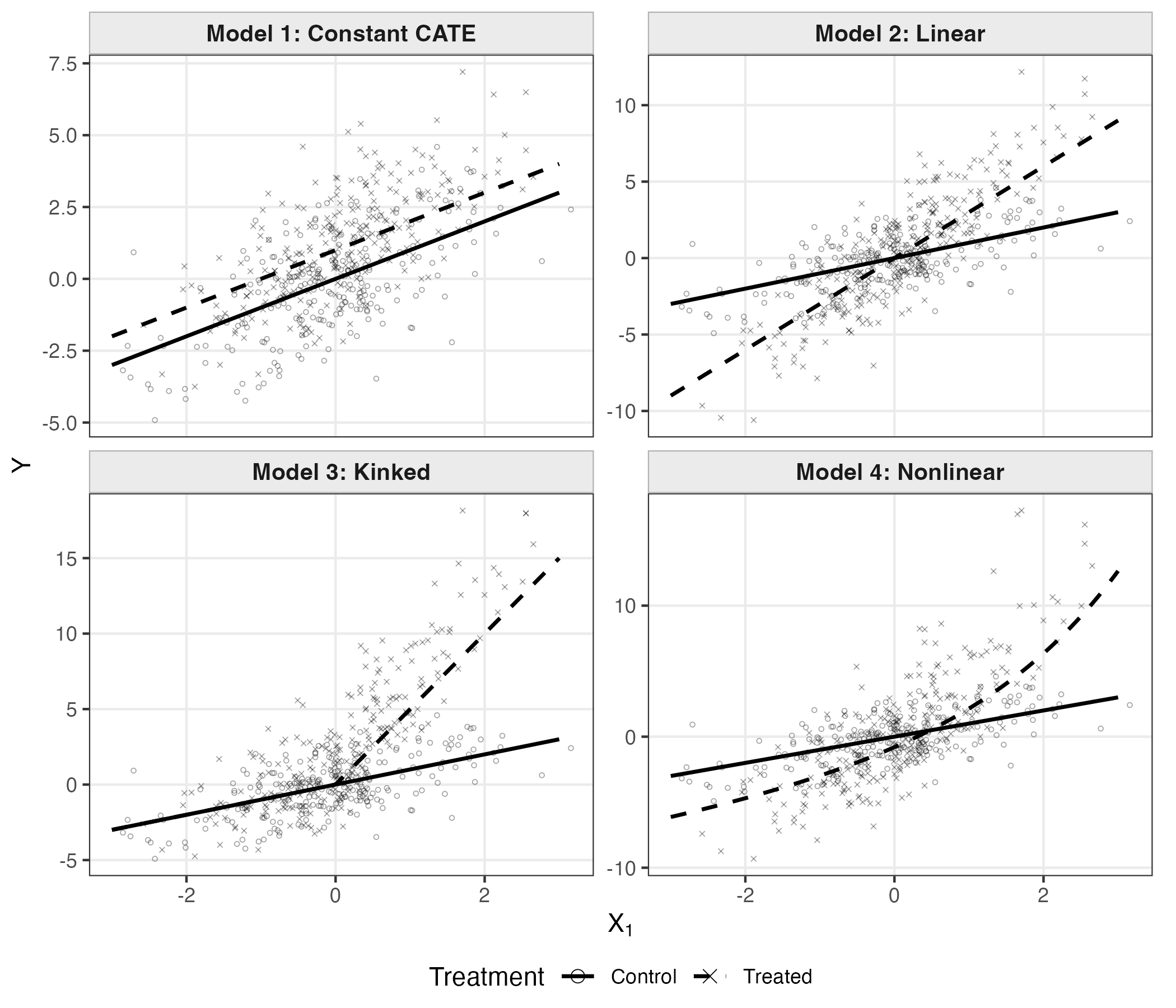}
    \end{center}
    \vspace{0.2cm}
\end{figure}

We also compare the performance with two existing tests in the literature, the nonparametric test proposed by \citet{crump2008nonparametric} (hereinafter CHIM) and the debiased machine learning test proposed by \citet{semenova2021debiased} (hereinafter SC).

We implement the sieve-based nonparametric test proposed by CHIM to evaluate the null hypothesis of a constant conditional average treatment effect. This method approaches the problem by comparing the shapes of the conditional outcome mean functions for the treated and control groups. We approximate these functions, \(\mu_1(x)\) and \(\mu_0(x)\), using a sieve basis expansion \(\mathbf{P}(x) = (1, p_1(x), \dots, p_K(x))'\), where the basis terms are constructed as a linear function of the covariates \(X_i\). This vector includes an intercept and \(K\) covariate-dependent basis terms. We estimate the coefficients by running two separate OLS regressions of the observed outcome \(Y_i\) on \(\mathbf{P}(X_i)\) for the treated and control subsamples, yielding the coefficient vectors \(\hat{\boldsymbol{\xi}}_1 = (\hat{\alpha}_1, \hat{\boldsymbol{\beta}}_1')'\) and \(\hat{\boldsymbol{\xi}}_0 = (\hat{\alpha}_0, \hat{\boldsymbol{\beta}}_0')'\). Under the null hypothesis, the treatment effect is constant, implying that the outcome functions are parallel and their slope coefficients are identical (\(\boldsymbol{\beta}_1 = \boldsymbol{\beta}_0\)). The test statistic evaluates the quadratic distance between these estimated slopes
\[
  T_{\mathrm{Crump}} = (\hat{\boldsymbol{\beta}}_1 - \hat{\boldsymbol{\beta}}_0)' \widehat{\mathbf{V}}_{\beta}^{-1} (\hat{\boldsymbol{\beta}}_1 - \hat{\boldsymbol{\beta}}_0),
\]
where \(\widehat{\mathbf{V}}_{\beta}\) is the robust covariance matrix for the difference in slope estimates.

As a benchmark for high-dimensional settings, we implement the Best Linear Predictor (BLP) test proposed by SC, following Example 2.2 in their paper. This framework approximates the CATE by projecting it onto a linear dictionary of covariates. The core of the method is the construction of a Neyman-orthogonal signal, which is the pseudo-outcome \(\psi(O_i)\) in Equation \eqref{eq:pseudo}, which serves as an unbiased proxy for the latent individual treatment effect. We employ the cross-fitting procedure proposed in their Definition 2.1. The sample is split into \(K\) folds, and for each observation \(i\) in fold \(k\), the signal \(\psi(O_i)\) is constructed using nuisance parameters estimated on the complementary folds. In the second stage, we project this cross-fitted signal onto a vector of covariates \(Z_i\) constructed as a second-order polynomial expansion of \(X_i\) (including interaction terms) to estimate the BLP coefficients. We solve the Lasso optimization problem
\[
  (\hat{\beta}_{0}, \hat{\boldsymbol{\beta}}_{\mathrm{Lasso}}) = \arg\min_{\beta_0, \boldsymbol{\beta}} \frac{1}{n} \sum_{i=1}^n (\hat{\psi}(O_i) - \beta_0 - Z_i'\boldsymbol{\beta})^2 + \lambda \|\boldsymbol{\beta}\|_1.
\]
The null hypothesis of a constant treatment effect implies that the best linear predictor is constant, or equivalently, that the slope coefficients are zero (\(\boldsymbol{\beta} = \mathbf{0}\)). We test this hypothesis using the debiased Lasso estimator to account for regularization bias, constructing a Wald statistic for the joint significance of the slope coefficients.

The empirical rejection proportions over \(1,000\) Monte Carlo replications at the nominal \(\alpha = 0.05\) level are presented in Table \ref{tbl:sim}. Under the constant CATE model, the results demonstrate the impasse detailed in Section \ref{sec:impasse}. The Naive DML estimator is severely undersized under the null: reusing the same evaluation fold biases its point estimate downward while its standard error degenerates even faster, so the standardized statistic drifts to \(-\infty\) and the one-sided test almost never rejects. The drift is slow for regularized learners, which is why the rejection rates remain small but non-zero and essentially flat across the sample sizes considered (see Appendix \ref{sec:appnaive}). Furthermore, CHIM and SC fail severely, with rejection rates far exceeding the nominal level even at large sample sizes. In contrast, our proposed IF-SS-CVT maintains excellent size control across all sample sizes.

Under the alternative hypotheses, all tests show consistent high power when \(n\) is large, while our IF-SS-CVT has lower power than the other tests when \(n\) is small. This is expected, as by randomly bisecting each evaluation fold to decouple empirical processes, the IF-SS-CVT operates on an effective sample size of \(n/2\). Despite this inherent finite-sample penalty, the IF-SS-CVT remains remarkably powerful when the sample size is large.

\begin{table}[H]\centering
  \def\sym#1{\ifmmode^{#1}\else\(^{#1}\)\fi}
  \caption{Simulated Rejection Rate}
  \begin{tabular}{lcccccc}
    \hline
                                           &       &       &       &           & \multicolumn{2}{c}{IF-SS-CVT}           \\
                                           & \(n\) & CHIM  & SC    & Naive DML & Lasso                         & XGBoost \\
    \hline
                                           & 250   & 98.8  & 20.2  & 0.7       & 4.0                           & 3.8     \\
    homo (size)                            & 500   & 67.3  & 36.3  & 0.7       & 4.2                           & 4.9     \\
                                           & 1000  & 30.6  & 44.1  & 0.6       & 4.5                           & 5.6     \\
                                           & 2000  & 17.6  & 13.9  & 0.6       & 4.8                           & 5.0     \\
    \hline
                                           & 250   & 100.0 & 100.0 & 100.0     & 95.2                          & 42.1    \\
    linear                                 & 500   & 100.0 & 100.0 & 100.0     & 99.9                          & 89.7    \\
                                           & 1000  & 100.0 & 100.0 & 100.0     & 100.0                         & 100.0   \\
                                           & 2000  & 100.0 & 100.0 & 100.0     & 100.0                         & 100.0   \\
    \hline
                                           & 250   & 100.0 & 100.0 & 100.0     & 56.2                          & 50.4    \\
    kinked                                 & 500   & 100.0 & 100.0 & 100.0     & 87.9                          & 94.6    \\
                                           & 1000  & 100.0 & 100.0 & 100.0     & 99.5                          & 100.0   \\
                                           & 2000  & 100.0 & 100.0 & 100.0     & 100.0                         & 100.0   \\
    \hline
                                           & 250   & 100.0 & 100.0 & 100.0     & 71.3                          & 33.9    \\
    nonlinear                              & 500   & 100.0 & 100.0 & 100.0     & 96.0                          & 77.5    \\
                                           & 1000  & 100.0 & 100.0 & 100.0     & 99.8                          & 97.5    \\
                                           & 2000  & 100.0 & 100.0 & 100.0     & 100.0                         & 100.0   \\
    \hline
  \end{tabular}
  \label{tbl:sim}
  \vspace*{0.09cm}
  \begin{minipage}{0.85\textwidth}
    {\footnotesize
      Empirical rejection proportions, in percentage points, at \(5\%\) significance level based on 1000 simulations. The columns ``CHIM'' and ``SC'' refer to the tests proposed by \citet{crump2008nonparametric} and \citet{semenova2021debiased}, respectively. ``Naive DML'' refers to the standard DML estimator without sample splitting. ``IF-SS-CVT'' refers to the proposed CATE Variance Test in Algorithm \ref{alg:ss_cvt} implemented with Lasso or XGBoost.}
  \end{minipage}
\end{table}

\section{Application \label{sec:app}}

In this section, we demonstrate the application of the proposed test to the NSW job training program data. In this program, participants were randomly assigned to either a job training program or a control group, and the treatment effect on future earnings can be estimated by directly comparing outcomes of the treated and control groups. In order to evaluate the validity of econometric estimators of treatment effects, \citet{lalonde1986evaluating} compared the treated individuals from the experiment to control groups drawn from two survey datasets: the Panel Study of Income Dynamics (PSID) and the Current Population Survey (CPS). The resulting datasets have been extensively analyzed in the influential works by \citet{dehejia1999causal, smith2005does, angrist2009mostly, sloczynski2022interpreting}, among others. In the context of CATE hypothesis testing, the dataset was analyzed by \citet{hsu2017consistent} and \citet{dai2023nonparametric}, who focused specifically on heterogeneity with respect to age. Using the proposed test, we examine heterogeneity with respect to all available covariates.

The dataset we use is NSW-CPS, which contains 185 treated units from the experiment and 15992 control units from the CPS. The outcome \(Y_i\) is the earnings in 1978, and the treatment \(D_i\) is a binary indicator of whether the individual received the job training. We consider the same set of covariates as those in column 4 of Table 3.3.3 in \citet{angrist2009mostly}, which includes age, age squared, education, dummy variables for black and Hispanic, marital status, a dummy indicator for high school degree, and pre-treatment earnings in 1974 and 1975. For this set of covariates \(X_i\), we test \(H_0: \tau(x) = c\) for some constant \(c\) and all covariate values \(x\). For nuisance parameter estimation in the IF-SS-CVT, we employ XGBoost. We compare the results with the CHIM and SC tests introduced in the simulation section, maintaining the same specifications for the basis functions (linear basis for CHIM and second-order polynomials with interactions for SC).

The test results are presented in Table~\ref{tbl:app1}. The CHIM test fails to reject the null hypothesis of constant treatment effects at the 5\% significance level (\(p = 0.40\)). This lack of rejection might be attributed to the test's lower power in finite samples with moderate-dimensional covariates, as observed in our simulations. In contrast, both the SC test and our proposed IF-SS-CVT with Lasso or XGBoost strongly reject the null hypothesis (\(p < 0.01\)), providing robust evidence for the presence of heterogeneous treatment effects. The rejection by the SC test suggests that some of the heterogeneity is linear in the covariates, while the consistent rejection by both Lasso- and XGBoost-based IF-SS-CVT confirms that this finding is not an artifact of a specific machine learning method. These findings complement the conventional ATE-focused analyses by highlighting that the treatment effect of job training likely varies across individuals with different characteristics.

\begin{table}[H]\centering
    \caption{Test Results for NSW Data}
    \label{tbl:app1}
    \begin{tabular}{lcc}
        \hline
        Test                & Statistic    & P-value    \\
        \hline
        CHIM                & Wald = 9.42  & 0.40       \\
        SC                  & Wald = 35.81 & \(<\) 0.01 \\
        IF-SS-CVT (Lasso)   & Z = 9.45     & \(<\) 0.01 \\
        IF-SS-CVT (XGBoost) & Z = 10.78    & \(<\) 0.01 \\
        \hline
    \end{tabular}
\end{table}

\section{Conclusion \label{sec:conclude}}

This paper develops a hypothesis test for the presence of heterogeneous treatment effects by targeting a single omnibus parameter, the variance of the CATE, \(\theta_0 = \Var(\tau_0(X))\). In developing this test, we identify a fundamental theoretical impasse in semiparametric inference at the boundary of the parameter space. On one hand, evaluating variance components on the identical sample leads to null degeneracy, where the asymptotic variance collapses to zero and invalidates standard Gaussian approximations. On the other hand, decoupling the empirical processes via standard sample-splitting destroys the Neyman orthogonality of the squared pseudo-outcomes.

To resolve this impasse, we develop a novel Intra-Fold Sample-Split algorithm. By randomly bisecting each evaluation fold and computing the total and residual variance components on mutually disjoint halves, our procedure guarantees a positive asymptotic variance under the null. By strictly coupling both evaluation halves to identically trained nuisance estimators, the non-orthogonal squared biases cancel out. We formally prove that this algorithm restores Neyman orthogonality, yields asymptotic normality, and guarantees valid Type I error control under the null hypothesis.

Monte Carlo simulations and an empirical application to the NSW job training program confirm the robust finite-sample performance of the proposed test. Our simulations provide empirical proof that, on the boundary, standard cross-fitted DML statistics degenerate and become severely conservative while projection-based HTE tests severely over-reject, whereas our IF-SS-CVT attains the nominal size. Beyond testing for treatment effect moderation, our algorithm provides a general framework for conducting robust hypothesis testing on nonlinear transformations of doubly robust scores.

\clearpage
\begin{singlespace}
  \bibliographystyle{ecta}
  \bibliography{references.bib}
\end{singlespace}

\newpage
\appendix
\setcounter{table}{0}
\renewcommand{\tablename}{Appendix Table}
\renewcommand{\figurename}{Appendix Figure}
\renewcommand{\thetable}{A\arabic{table}}
\setcounter{figure}{0}
\renewcommand{\thefigure}{A\arabic{figure}}

\section{Appendix: Proofs \label{sec:appproofs}}

\setcounter{lemma}{0}
\renewcommand{\thelemma}{A\arabic{lemma}}

\numberwithin{equation}{section}
\setcounter{equation}{0}
\subsection{Proof of Proposition \ref{prp:cvtif}}
\begin{proof}
  We regard the CATE variance \(\theta_0\) as a statistical functional \(\theta(P)\) defined on a nonparametric space of probability distributions \(\mathcal{P}\) satisfying Assumptions \ref{ass:unconfound}--\ref{ass:nuisance_regularity}. Following the identification result in Equation \eqref{eq:cate_var} of the main text, we can express the functional as the difference between two variance components
  \[
    \theta(P) = V_{\mathrm{tot}}(P) - V_{\mathrm{res}}(P),
  \]
  where
  \begin{align*}
    V_{\mathrm{tot}}(P) & = \mathbb{E}_P\left[ (\psi_P(O) - \tau_{\mathrm{ATE}}(P))^2 \right], \\
    V_{\mathrm{res}}(P) & = \mathbb{E}_P\left[ (\psi_P(O) - \tau_P(X))^2 \right].
  \end{align*}
  Here, \(\psi_P(O)\) is the doubly robust pseudo-outcome evaluated using the nuisance parameters \(\eta_P = (e_P, \mu_P)\) under distribution \(P\). The projection parameters are defined as \(\tau_{\mathrm{ATE}}(P) = \mathbb{E}_P[\psi_P(O)]\) and \(\tau_P(X) = \mathbb{E}_P[\psi_P(O) \mid X]\).
  
  To derive the influence function, we compute the pathwise (G\^{a}teaux) derivative of the functional \(\theta(P)\) along a smooth, one-dimensional parametric submodel \(\{P_t : t \in [0, \epsilon)\} \subset \mathcal{P}\) that passes through the true distribution \(P_0\) at \(t=0\). Let \(S(O) = \frac{\partial}{\partial t} \log dP_t(O) \big|_{t=0}\) denote the score function of this submodel. By definition, the influence function \(\phi_\theta(O)\) is the unique mean-zero function satisfying
  \[
    \frac{d}{dt} \theta(P_t) \bigg|_{t=0} = \mathbb{E}_{P_0} \left[ \phi_\theta(O) S(O) \right].
  \]

  Applying the chain rule, the variation of the total variance functional comes from the change in the measure \(P_t\), the nuisance parameter variation in \(\psi_{P_t}(O)\), and the change in the centering parameter \(\tau_{\mathrm{ATE}}(P_t)\)
  \begin{align*}
    \frac{d}{dt} V_{\mathrm{tot}}(P_t) \bigg|_{t=0} & = \mathbb{E}_{P_0} \left[ (\psi_{P_0}(O) - \tau_{\mathrm{ATE}}(P_0))^2 S(O) \right]                                                                                                                                            \\
                                                    & \quad + 2 \mathbb{E}_{P_0} \left[ (\psi_{P_0}(O) - \tau_{\mathrm{ATE}}(P_0)) \left( \frac{\partial \psi_{P_t}(O)}{\partial t}\bigg|_{t=0} - \frac{\partial \tau_{\mathrm{ATE}}(P_t)}{\partial t}\bigg|_{t=0} \right) \right].
  \end{align*}
  Since \(\tau_{\mathrm{ATE}}(P_0) = \mathbb{E}_{P_0}[\psi_{P_0}(O)]\), the expected pseudo-outcome residual strictly vanishes: \(\mathbb{E}_{P_0}[\psi_{P_0}(O) - \tau_{\mathrm{ATE}}(P_0)] = 0\). As a result, the derivative with respect to \(\tau_{\mathrm{ATE}}(P_t)\) evaluates to zero
  \[
    -2 \frac{\partial \tau_{\mathrm{ATE}}(P_t)}{\partial t}\bigg|_{t=0} \mathbb{E}_{P_0} \left[ \psi_{P_0}(O) - \tau_{\mathrm{ATE}}(P_0) \right] = 0.
  \]
  Hence, the pathwise derivative simplifies to
  \begin{equation}
    \label{eq:dt_vtot}
    \frac{d}{dt} V_{\mathrm{tot}}(P_t) \bigg|_{t=0} = \mathbb{E}_{P_0} \left[ (\psi_{P_0}(O) - \tau_{\mathrm{ATE}}(P_0))^2 S(O) \right] + \mathbb{E}_{P_0} \left[ 2 (\psi_{P_0}(O) - \tau_{\mathrm{ATE}}(P_0)) \frac{\partial \psi_{P_t}(O)}{\partial t}\bigg|_{t=0} \right].
  \end{equation}

  Similarly, we differentiate the residual variance functional. The function \(\tau_{P_t}(X)\) minimizes the mean squared error \(\mathbb{E}_{P_t}[(\psi_{P_t}(O) - f(X))^2]\). By the Envelope Theorem for functional optimization (or simply the orthogonal projection property), the first-order variation with respect to the optimal conditional mean function \(\tau_{P_t}(X)\) vanishes. Using the Law of Iterated Expectations, we have
  \[
    \mathbb{E}_{P_0} \left[ 2(\psi_{P_0}(O) - \tau_{P_0}(X)) \frac{\partial \tau_{P_t}(X)}{\partial t}\bigg|_{t=0} \right] = \mathbb{E}_{P_0} \left[ 2 \frac{\partial \tau_{P_t}(X)}{\partial t}\bigg|_{t=0} \underbrace{\mathbb{E}_{P_0} \left[ \psi_{P_0}(O) - \tau_{P_0}(X) \mid X \right]}_{= 0} \right] = 0.
  \]
  Thus, the pathwise derivative of \(V_{\mathrm{res}}(P_t)\) is
  \begin{equation}
    \label{eq:dt_vres}
    \frac{d}{dt} V_{\mathrm{res}}(P_t) \bigg|_{t=0} = \mathbb{E}_{P_0} \left[ (\psi_{P_0}(O) - \tau_{P_0}(X))^2 S(O) \right] + \mathbb{E}_{P_0} \left[ 2 (\psi_{P_0}(O) - \tau_{P_0}(X)) \frac{\partial \psi_{P_t}(O)}{\partial t}\bigg|_{t=0} \right].
  \end{equation}

  As demonstrated in Section \ref{sec:impasse} of the main text, the conditional covariance between the pseudo-outcome and its nuisance derivative, \(\mathbb{E}_{P_0} [ (\psi_{P_0}(O) - \tau_{P_0}(X)) \frac{\partial \psi_{P_t}(O)}{\partial t}|_{t=0} \mid X ]\), is generally non-zero. Thus, \(V_{\mathrm{tot}}(P)\) and \(V_{\mathrm{res}}(P)\) are not individually Neyman orthogonal.

  However, by subtracting \eqref{eq:dt_vres} from \eqref{eq:dt_vtot}, we obtain the pathwise derivative of the target parameter \(\theta(P_t)\). The combined nuisance variation of the difference is
  \begin{align*}
    \Delta_{\mathrm{nuisance}} & = \mathbb{E}_{P_0} \left[ 2 (\psi_{P_0}(O) - \tau_{\mathrm{ATE}}(P_0)) \frac{\partial \psi_{P_t}(O)}{\partial t}\bigg|_{t=0} \right] - \mathbb{E}_{P_0} \left[ 2 (\psi_{P_0}(O) - \tau_{P_0}(X)) \frac{\partial \psi_{P_t}(O)}{\partial t}\bigg|_{t=0} \right] \\
                               & = \mathbb{E}_{P_0} \left[ 2 (\tau_{P_0}(X) - \tau_{\mathrm{ATE}}(P_0)) \frac{\partial \psi_{P_t}(O)}{\partial t}\bigg|_{t=0} \right].
  \end{align*}
  Crucially, the scalar factor \((\tau_{P_0}(X) - \tau_{\mathrm{ATE}}(P_0))\) is purely a function of the covariates \(X\). Applying the Law of Iterated Expectations, we condition on \(X\) first
  \begin{equation}
    \label{eq:dt_nuisance}
    \Delta_{\mathrm{nuisance}} = \mathbb{E}_{P_0} \left[ 2 (\tau_{P_0}(X) - \tau_{\mathrm{ATE}}(P_0)) \ \mathbb{E}_{P_0} \left[ \frac{\partial \psi_{P_t}(O)}{\partial t}\bigg|_{t=0} \ \bigg| \ X \right] \right].
  \end{equation}
  To evaluate the inner conditional expectation, we expand the derivative of the doubly robust pseudo-outcome with respect to the submodel parameter \(t\)
  \begin{align*}
    \frac{\partial \psi_{P_t}(O)}{\partial t}\bigg|_{t=0} & = \frac{\partial \mu_{P_t}(1, X)}{\partial t}\bigg|_{t=0} \left( 1 - \frac{D}{e_{P_0}(X)} \right) - \frac{\partial \mu_{P_t}(0, X)}{\partial t}\bigg|_{t=0} \left( 1 - \frac{1 - D}{1 - e_{P_0}(X)} \right)           \\
                                                          & \quad - \frac{D(Y - \mu_{P_0}(1, X))}{e_{P_0}(X)^2} \frac{\partial e_{P_t}(X)}{\partial t}\bigg|_{t=0} + \frac{(1 - D)(Y - \mu_{P_0}(0, X))}{(1 - e_{P_0}(X))^2} \frac{\partial e_{P_t}(X)}{\partial t}\bigg|_{t=0}.
  \end{align*}
  Taking the conditional expectation given \(X\) under the true probability measure \(P_0\), we use the unconfoundedness assumption \(\mathbb{E}_{P_0}[D \mid X] = e_{P_0}(X)\) and the consistency of the outcome regressions \(\mathbb{E}_{P_0}[Y \mid D=d, X] = \mu_{P_0}(d, X)\)
  \begin{align*}
     & \mathbb{E}_{P_0}\left[1 - \frac{D}{e_{P_0}(X)} \ \bigg| \ X \right] = 1 - \frac{e_{P_0}(X)}{e_{P_0}(X)} = 0                                                                                                            \\
     & \mathbb{E}_{P_0}\left[1 - \frac{1 - D}{1 - e_{P_0}(X)} \ \bigg| \ X \right] = 1 - \frac{1 - e_{P_0}(X)}{1 - e_{P_0}(X)} = 0                                                                                            \\
     & \mathbb{E}_{P_0}\left[\frac{D(Y - \mu_{P_0}(1, X))}{e_{P_0}(X)^2} \ \bigg| \ X \right] = \frac{e_{P_0}(X)}{e_{P_0}(X)^2} \underbrace{\mathbb{E}_{P_0}[Y - \mu_{P_0}(1, X) \mid D=1, X]}_{=0} = 0                       \\
     & \mathbb{E}_{P_0}\left[\frac{(1 - D)(Y - \mu_{P_0}(0, X))}{(1 - e_{P_0}(X))^2} \ \bigg| \ X \right] = \frac{1 - e_{P_0}(X)}{(1 - e_{P_0}(X))^2} \underbrace{\mathbb{E}_{P_0}[Y - \mu_{P_0}(0, X) \mid D=0, X]}_{=0} = 0 \\
  \end{align*}
  Consequently, \(\mathbb{E}_{P_0} \left[ \frac{\partial \psi_{P_t}(O)}{\partial t}\big|_{t=0} \ \big| \ X \right] = 0\). This fundamental property confirms that the expected first-order nuisance bias of the doubly robust pseudo-outcome conditional on \(X\) is exactly zero. Substituting this back into Equation \eqref{eq:dt_nuisance}, the entire joint nuisance variation algebraically annihilates: \(\Delta_{\mathrm{nuisance}} = 0\).

  With the nuisance variation strictly vanishing, the pathwise derivative of the target functional is exclusively driven by the measure variation
  \[
    \frac{d}{dt}\theta(P_t) \bigg|_{t=0} = \mathbb{E}_{P_0} \left[ \left( (\psi_{P_0}(O) - \tau_{\mathrm{ATE}}(P_0))^2 - (\psi_{P_0}(O) - \tau_{P_0}(X))^2 \right) S(O) \right].
  \]
  To officially identify the influence function \(\phi_\theta(O)\) as the Riesz representer, the bracketed term must have an expected value of zero under \(P_0\). Because the score function is mean-zero (\(\mathbb{E}_{P_0}[S(O)] = 0\)), we can subtract the constant \(\theta_0 = \mathbb{E}_{P_0} [ (\psi_{P_0}(O) - \tau_{\mathrm{ATE}}(P_0))^2 - (\psi_{P_0}(O) - \tau_{P_0}(X))^2 ]\) from the integrand without altering the expectation:
  \[
    \frac{d}{dt}\theta(P_t) \bigg|_{t=0} = \mathbb{E}_{P_0} \left[ \Big( (\psi_{P_0}(O) - \tau_{\mathrm{ATE}}(P_0))^2 - (\psi_{P_0}(O) - \tau_{P_0}(X))^2 - \theta_0 \Big) S(O) \right].
  \]
  Letting \(\psi(O_i) = \psi_{P_0}(O_i)\), \(\tau_{\mathrm{ATE}} = \tau_{\mathrm{ATE}}(P_0)\), and \(\tau_0(X_i) = \tau_{P_0}(X_i)\) for notational simplicity, we extract the required influence function for the CATE variance \(\theta_0\)
  \[
    \phi_\theta(O_i) = (\psi(O_i) - \tau_{\mathrm{ATE}})^2 - (\psi(O_i) - \tau_0(X_i))^2 - \theta_0.
  \]
  This completes the proof.
\end{proof}

\subsection{Proof of Theorem \ref{thm:ss_cvt}}

To establish the asymptotic validity of Algorithm \ref{alg:ss_cvt}, we map the estimator into a sum of independent empirical processes. Throughout the proof, let \(P_0\) denote the true probability measure. Let \(\psi_0(O)\) denote the true uncentered efficient influence function (oracle pseudo-outcome) evaluated with the true nuisance parameters \(\eta_0 = (\mu_0, e_0)\). Let \(\mathbb{P}_{n,k,\mathrm{tot}}\) and \(\mathbb{P}_{n,k,\mathrm{res}}\) denote the empirical probability measures over the mutually disjoint evaluation sub-splits \(\mathcal{I}_{k,\mathrm{tot}}\) and \(\mathcal{I}_{k,\mathrm{res}}\). For notational simplicity and without loss of generality, we assume the \(K\) folds are perfectly balanced such that \(n_k = n/K\), and the sub-splits are of equal size \(m_k = m = n/(2K)\). 

\begin{lemma}
  \label{lem:A1}
  Let \(\hat{\eta}_k = (\hat{e}_k, \hat{\mu}_k, \hat{\tau}_k, \hat{\tau}_{\mathrm{ATE},k})\) be the identically trained nuisance estimators on \(\mathcal{I}_{-k}\). Conditional on \(\mathcal{I}_{-k}\), define the expected drift of the paired variance components evaluated at \(\hat{\eta}_k\) as 
  \[
    B_k = P_0\left[ \big(\hat{\psi}_k(O) - \hat{\tau}_{\mathrm{ATE},k}\big)^2 - \big(\hat{\psi}_k(O) - \hat{\tau}_k(X)\big)^2 \mid \mathcal{I}_{-k} \right] - \theta_0 
  \]
  where \(P_0[\cdot \mid \mathcal{I}_{-k}]\) is the expectation taken over a new observation \(O = (Y,D,X)\) independent of \(\mathcal{I}_{-k}\). Under Assumptions \ref{ass:unconfound}--\ref{ass:nuisance_regularity}, \(B_k = o_P(n^{-1/2})\).
\end{lemma}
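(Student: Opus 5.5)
The plan is to reduce the integrand to a product form, take conditional expectations given \(X\) (holding the training data \(\mathcal{I}_{-k}\) fixed), and show that everything except \(\theta_0\) is a second-order remainder. For any scalars \(a,t\),
\[
  (\hat\psi - a)^2 - (\hat\psi - t)^2 = (t-a)\,(2\hat\psi - a - t).
\]
With \(a=\hat\tau_{\mathrm{ATE},k}\) and \(t=\hat\tau_k(X)\), the integrand in \(B_k\) becomes \((\hat\tau_k(X)-\hat\tau_{\mathrm{ATE},k})(2\hat\psi_k(O)-\hat\tau_{\mathrm{ATE},k}-\hat\tau_k(X))\). Given \(\mathcal{I}_{-k}\), the functions \(\hat e_k,\hat\mu_k,\hat\tau_k\) and the scalar \(\hat\tau_{\mathrm{ATE},k}\) are fixed, because they are trained out-of-fold and \(O\) is an independent new draw. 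The first factor depends only on \(X\). So by iterated expectations only \(\mathbb{E}[\hat\psi_k(O)\mid X,\mathcal{I}_{-k}]\) enters.

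\textbf{Step 1: conditional bias of the estimated pseudo-outcome.} A direct calculation uses Assumption~\ref{ass:unconfound}, \(\mathbb{E}[D\mid X]=e_0(X)\) and \(\mathbb{E}[Y\mid D=d,X]=\mu_0(d,X)\). It gives \(\mathbb{E}[\hat\psi_k(O)\mid X,\mathcal{I}_{-k}] = \tau_0(X) + R_k(X)\), where
\[
  R_k(X) = \big(\mu_0(1,X)-\hat\mu_k(1,X)\big)\frac{e_0(X)-\hat e_k(X)}{\hat e_k(X)} - \big(\mu_0(0,X)-\hat\mu_k(0,X)\big)\frac{\hat e_k(X)-e_0(X)}{1-\hat e_k(X)}.
\]
This is the familiar mixed-bias remainder of the AIPW score. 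By Assumption~\ref{ass:nuisance_regularity}(ii) and Cauchy--Schwarz,
\[
  \|R_k\|_{P,1} \le \xi^{-1}\,\|\hat e_k-e_0\|_{P,2}\,\|\hat\mu_k-\mu_0\|_{P,2} = o_P(n^{-1/2}).
\]

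\textbf{Step 2: exact expansion.} Write \(u(X)=\tau_0(X)-\tau_{\mathrm{ATE}}\), \(\delta_k=\hat\tau_k-\tau_0\) and \(\varepsilon_k=\hat\tau_{\mathrm{ATE},k}-\tau_{\mathrm{ATE}}\). Substituting Step 1, the conditional mean of the integrand equals
\[
  \mathbb{E}\big[(u+\delta_k-\varepsilon_k)(u-\delta_k-\varepsilon_k)\big] + 2\,\mathbb{E}\big[(\hat\tau_k(X)-\hat\tau_{\mathrm{ATE},k})R_k(X)\big].
\]
The first term equals \(\mathbb{E}[(u-\varepsilon_k)^2]-\|\delta_k\|_{P,2}^2\). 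Since \(\mathbb{E}[u]=0\) and \(\mathbb{E}[u^2]=\theta_0\), this is \(\theta_0+\varepsilon_k^2-\|\delta_k\|_{P,2}^2\). Hence
\[
  B_k = \varepsilon_k^2 - \|\hat\tau_k-\tau_0\|_{P,2}^2 + 2\,\mathbb{E}\big[(\hat\tau_k-\hat\tau_{\mathrm{ATE},k})R_k\mid\mathcal{I}_{-k}\big].
\]

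\textbf{Step 3: rates.} By Assumption~\ref{ass:nuisance_regularity}(i), \(\varepsilon_k^2=O_P(n^{-1})\) and \(\|\delta_k\|_{P,2}^2=o_P(n^{-1/2})\). For the cross term, \(|\hat\tau_k(X)|\le C\) almost surely and \(|\hat\tau_{\mathrm{ATE},k}|\le|\tau_{\mathrm{ATE}}|+O_P(n^{-1/2})=O_P(1)\). The cross term is therefore bounded by \(O_P(1)\cdot\|R_k\|_{P,1}=o_P(n^{-1/2})\). Summing the three pieces gives \(B_k=o_P(n^{-1/2})\).

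The main obstacle is Step 1 and the cross term in Step 3. The point is to see that the only first-order contamination from the squared, non-orthogonal loss enters through \(\mathbb{E}[\hat\psi_k\mid X]\). It is multiplied by the same factor \(\hat\tau_k-\hat\tau_{\mathrm{ATE},k}\) in both variance components, so it collapses to the product-rate remainder \(R_k\). To control that remainder with only \(L_2\) rates, I would use an \(L_\infty\) bound on the multiplier together with an \(L_1\) bound on \(R_k\), rather than Cauchy--Schwarz in \(L_2\). This is why the uniform boundedness and overlap in Assumption~\ref{ass:nuisance_regularity}(ii) are needed.
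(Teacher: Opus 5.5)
Your proposal is correct and is essentially the paper's argument. The same identity cancels the non-orthogonal \(\hat{\psi}_k^2\) terms, conditioning on \(X\) reduces everything to the doubly robust bias \(R_k(X)=\mathbb{E}_{P_0}[\hat{\psi}_k - \psi_0 \mid X, \mathcal{I}_{-k}]\), and the remaining pieces \(\varepsilon_k^2 - \|\delta_k\|_{P,2}^2\) match the paper's \(\Delta c^2 - P_0[\Delta\tau^2 \mid \mathcal{I}_{-k}]\).

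The only difference is a minor regrouping. You merge the paper's two \(\Delta\psi\) cross terms into the single term \(2P_0[(\hat{\tau}_k - \hat{\tau}_{\mathrm{ATE},k})R_k \mid \mathcal{I}_{-k}]\) and bound it by a sup-norm times \(\|R_k\|_{P,1}\), which needs only the product rate of \(\hat{e}_k\) and \(\hat{\mu}_k\). The paper instead bounds \(\Delta\psi(\Delta\tau-\Delta c)\) separately by \(L_2\) Cauchy--Schwarz, which uses the individual \(o_P(n^{-1/4})\) rate of \(\hat{e}_k\).
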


\begin{proof}
  We expand the difference of the estimated expected losses using the identity \((c-x)^2 - (c-y)^2 = 2c(y-x) + x^2 - y^2\) with \(c = \hat{\psi}_k\), \(x = \hat{\tau}_{\mathrm{ATE},k}\), and \(y = \hat{\tau}_k\),
  \[ \big(\hat{\psi}_k - \hat{\tau}_{\mathrm{ATE},k}\big)^2 - \big(\hat{\psi}_k - \hat{\tau}_k\big)^2 = 2\hat{\psi}_k(\hat{\tau}_k - \hat{\tau}_{\mathrm{ATE},k}) - \hat{\tau}_k^2 + \hat{\tau}_{\mathrm{ATE},k}^2. \]
  Crucially, because Algorithm \ref{alg:ss_cvt} strictly couples both components to the exact same evaluated \(\hat{\psi}_k\), the non-orthogonal \(\hat{\psi}_k^2\) terms cancel out.

  To isolate the estimation errors from the true functions, define the perturbations: \(\Delta \psi = \hat{\psi}_k - \psi_0\), \(\Delta \tau = \hat{\tau}_k - \tau_0\), and \(\Delta c = \hat{\tau}_{\mathrm{ATE},k} - \tau_{\mathrm{ATE}}\). We expand the exact target parameter \(\theta_0\) using the identical algebraic expansion evaluated at the true functions: \(\theta_0 = P_0[ 2\psi_0(\tau_0 - \tau_{\mathrm{ATE}}) - \tau_0^2 + \tau_{\mathrm{ATE}}^2 ]\).

  Subtracting \(\theta_0\) from the conditional expectation of the estimated losses yields the exact drift \(B_k\)
  \[ 
    B_k = P_0\Big[ 2(\psi_0 + \Delta \psi)(\tau_0 + \Delta \tau - \tau_{\mathrm{ATE}} - \Delta c) - (\tau_0 + \Delta \tau)^2 + (\tau_{\mathrm{ATE}} + \Delta c)^2 \mid \mathcal{I}_{-k} \Big] - \theta_0. \]
  Expanding this expression and grouping terms gives
  \begin{align*}
    B_k = & \underbrace{P_0[2\psi_0 \Delta \tau - 2\tau_0 \Delta \tau \mid \mathcal{I}_{-k}]}_{T_1} + \underbrace{P_0[-2\psi_0 \Delta c + 2\tau_{\mathrm{ATE}} \Delta c \mid \mathcal{I}_{-k}]}_{T_2} +               \\
          & \quad + 2P_0[\Delta \psi(\tau_0 - \tau_{\mathrm{ATE}}) \mid \mathcal{I}_{-k}] + 2P_0[\Delta \psi(\Delta \tau - \Delta c) \mid \mathcal{I}_{-k}] - P_0[\Delta \tau^2 \mid \mathcal{I}_{-k}] + \Delta c^2.
  \end{align*}
  By the Law of Iterated Expectations, \(P_0[\psi_0 \mid X] = \tau_0(X)\) and \(P_0[\psi_0] = \tau_{\mathrm{ATE}}\). Therefore, the linear error terms cancel out: \(T_1 = P_0[2\tau_0\Delta \tau - 2\tau_0\Delta \tau \mid \mathcal{I}_{-k}] = 0\), and \(T_2 = -2\tau_{\mathrm{ATE}}\Delta c + 2\tau_{\mathrm{ATE}}\Delta c = 0\). The conditional drift strictly simplifies to four remainder components:
  \[
    B_k = 2 P_0\big[ \Delta \psi (\tau_0 - \tau_{\mathrm{ATE}}) \mid \mathcal{I}_{-k} \big] + 2 P_0\big[ \Delta \psi (\Delta \tau - \Delta c) \mid \mathcal{I}_{-k} \big] - P_0\big[ \Delta \tau^2 \mid \mathcal{I}_{-k} \big] + \Delta c^2.
  \]
  We bound these four components using the \(L_2\) convergence rates in Assumption \ref{ass:nuisance_regularity}.

  By construction of the pseudo-outcome, the conditional expectation of its estimation error is
  \[
    \mathbb{E}_{P_0}[\Delta \psi \mid X, \mathcal{I}_{-k}] = \frac{\hat{e}_k - e_0}{\hat{e}_k}(\hat{\mu}_k(1) - \mu_0(1)) + \frac{\hat{e}_k - e_0}{1 - \hat{e}_k}(\hat{\mu}_k(0) - \mu_0(0)).
  \]
  Because the true functions \(\tau_0(X)\) and \(\tau_{\mathrm{ATE}}\) are uniformly bounded by \(C\) (Assumption \ref{ass:nuisance_regularity}(ii)), their absolute difference is bounded by \(2C\). Taking the absolute value and applying Cauchy-Schwarz to the product of errors gives
  \[
    2\big|P_0\big[ \Delta \psi (\tau_0 - \tau_{\mathrm{ATE}}) \mid \mathcal{I}_{-k}\big]\big| \le \frac{4C}{\xi} \|\hat{e}_k - e_0\|_{P,2} \big( \|\hat{\mu}_k(1) - \mu_0(1)\|_{P,2} + \|\hat{\mu}_k(0) - \mu_0(0)\|_{P,2} \big),
  \]
  where the overlap constant \(1/\xi\) comes from Assumption \ref{ass:overlap}. By Assumption \ref{ass:nuisance_regularity}(i), this is \(o_P(n^{-1/4}) \times o_P(n^{-1/4}) = o_P(n^{-1/2})\).

  For the second cross-term, the factors \(\Delta \tau\) and \(\Delta c\) are measurable with respect to \(X\) and \(\mathcal{I}_{-k}\), so by the Law of Iterated Expectations we may pass to the conditional expectation \(\mathbb{E}_{P_0}[\Delta \psi \mid X, \mathcal{I}_{-k}]\) before bounding. Because the true and estimated outcome functions are uniformly bounded by \(C\) (Assumption \ref{ass:nuisance_regularity}(ii)), the absolute differences \(|\hat{\mu}_k(d, X) - \mu_0(d, X)|\) are bounded by \(2C\) almost surely, and the \(L_2(P_0)\) norm of the conditional expectation satisfies
  \[
    \big\|\mathbb{E}_{P_0}[\Delta \psi \mid X, \mathcal{I}_{-k}]\big\|_{P,2} \le \frac{2C}{\xi} \|\hat{e}_k - e_0\|_{P,2} = o_P(n^{-1/4}).
  \]
  Applying Cauchy-Schwarz to the outer expectation,
  \[
    2\big|P_0\big[ \Delta \psi (\Delta \tau - \Delta c) \mid \mathcal{I}_{-k}\big]\big| \le 2 \big\|\mathbb{E}_{P_0}[\Delta \psi \mid X, \mathcal{I}_{-k}]\big\|_{P,2} \big( \|\Delta \tau\|_{P,2} + |\Delta c| \big) = o_P(n^{-1/2}).
  \]

  \(P_0[\Delta \tau^2 \mid \mathcal{I}_{-k}] = \|\hat{\tau}_k - \tau_0\|_{P,2}^2 = \big( o_P(n^{-1/4}) \big)^2 = o_P(n^{-1/2})\).

  \(\Delta c^2 = (\hat{\tau}_{\mathrm{ATE},k} - \tau_{\mathrm{ATE}})^2 = \big( O_P(n^{-1/2}) \big)^2 = O_P(n^{-1}) = o_P(n^{-1/2})\).

  Summing these bounds confirms that \(B_k = o_P(n^{-1/2})\). Neyman orthogonality for \(\theta_0\) is successfully restored.
\end{proof}

\begin{lemma}
  \label{lem:A2}
  Define the influence functions for the variance components as \(\phi_{\mathrm{tot}}(O_i) = (\psi_0(O_i) - \tau_{\mathrm{ATE}})^2 - V_{\mathrm{tot}}\) and \(\phi_{\mathrm{res}}(O_i) = (\psi_0(O_i) - \tau_0(X_i))^2 - V_{\mathrm{res}}\). The split-sample estimator \(\hat{\theta}_{\mathrm{split}} = \frac{1}{K}\sum_{k=1}^K \hat{\theta}_k\) satisfies the asymptotic expansion
  \[
    \sqrt{n}(\hat{\theta}_{\mathrm{split}} - \theta_0) = \frac{\sqrt{n}}{K} \sum_{k=1}^K \left( \mathbb{P}_{n,k,\mathrm{tot}}[\phi_{\mathrm{tot}}] - \mathbb{P}_{n,k,\mathrm{res}}[\phi_{\mathrm{res}}] \right) + o_P(1).
  \]
\end{lemma}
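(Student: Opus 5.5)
I will follow the standard cross-fitting decomposition fold by fold and then aggregate. For each fold \(k\), write \(\hat{\theta}_k = \hat{V}_{\mathrm{tot},k} - \hat{V}_{\mathrm{res},k}\). Let \(\hat\ell_{\mathrm{tot},k}(O) = (\hat\psi_k(O) - \hat\tau_{\mathrm{ATE},k})^2\) and \(\hat\ell_{\mathrm{res},k}(O) = (\hat\psi_k(O) - \hat\tau_k(X))^2\), with oracle counterparts \(\ell_{\mathrm{tot}} = \phi_{\mathrm{tot}} + V_{\mathrm{tot}}\) and \(\ell_{\mathrm{res}} = \phi_{\mathrm{res}} + V_{\mathrm{res}}\). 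The exact decomposition is
\begin{align*}
  \hat\theta_k - \theta_0 = {} & \big(\mathbb{P}_{n,k,\mathrm{tot}}[\phi_{\mathrm{tot}}] - \mathbb{P}_{n,k,\mathrm{res}}[\phi_{\mathrm{res}}]\big) + B_k \\
  & + (\mathbb{P}_{n,k,\mathrm{tot}} - P_0)[\hat\ell_{\mathrm{tot},k} - \ell_{\mathrm{tot}}] - (\mathbb{P}_{n,k,\mathrm{res}} - P_0)[\hat\ell_{\mathrm{res},k} - \ell_{\mathrm{res}}].
\end{align*}
Here \(B_k\) is exactly the conditional drift of Lemma~\ref{lem:A1}, because \(P_0[\hat\ell_{\mathrm{tot},k} - \hat\ell_{\mathrm{res},k}\mid\mathcal{I}_{-k}] - \theta_0 = B_k\). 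This is the key point: although the two empirical measures differ, both centred parts are taken with respect to the same \(P_0\), and both losses use the same \(\hat\eta_k\). The non-orthogonal squared terms therefore cancel inside \(B_k\). By Lemma~\ref{lem:A1}, \(\sqrt{n}B_k = o_P(1)\), and since \(K\) is fixed, summing over \(k\) with weight \(1/K\) preserves this.

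It remains to show that the two empirical-process terms are \(o_P(n^{-1/2})\). Conditional on \(\mathcal{I}_{-k}\), the function \(\hat\ell_{\mathrm{tot},k} - \ell_{\mathrm{tot}}\) is fixed, and the observations in \(\mathcal{I}_{k,\mathrm{tot}}\) are i.i.d.\ and independent of \(\mathcal{I}_{-k}\). The random bisection is independent of the data, so this still holds. Chebyshev's inequality then gives
\[
\mathbb{E}\Big[\big((\mathbb{P}_{n,k,\mathrm{tot}} - P_0)[\hat\ell_{\mathrm{tot},k} - \ell_{\mathrm{tot}}]\big)^2 \,\Big|\, \mathcal{I}_{-k}\Big] \le \frac{1}{m}\|\hat\ell_{\mathrm{tot},k} - \ell_{\mathrm{tot}}\|_{P,2}^2.
\]
Since \(m = n/(2K)\), it suffices to show \(\|\hat\ell_{\mathrm{tot},k} - \ell_{\mathrm{tot}}\|_{P,2} = o_P(1)\), and similarly for the residual loss. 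Conditional convergence in probability to zero then transfers to unconditional convergence by the usual bounded-convergence argument.

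The main obstacle is this \(L_2\) consistency of the squared losses. Write \(a^2 - b^2 = (a-b)(a+b)\) with \(a - b = \Delta\psi - \Delta c\) for the total loss, and \(a - b = \Delta\psi - \Delta\tau\) for the residual loss. The factor \(a + b\) contains \(\psi_0\), which is not bounded, so I cannot simply pull out a sup-norm. I would handle it in three steps.
\begin{itemize}
\item First, bound \(\Delta\psi\) pointwise. By Assumption~\ref{ass:nuisance_regularity}(ii) and overlap, \(|\Delta\psi| \lesssim |\hat e_k - e_0|(1 + |Y|) + |\hat\mu_k - \mu_0|\), and \(|\Delta\psi|\) is also bounded by a constant multiple of \(1 + |Y|\).
\item Next, apply Hölder/Cauchy--Schwarz conditional on \(X\) and \(D\). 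This requires \(\mathbb{E}[Y^4\mid X]\) to be bounded, or at least truncation of \(Y\). With only \(\mathbb{E}[Y^4] < \infty\), I would truncate at level \(M\): the tail contribution is uniformly small by uniform integrability of \(Y^4\), while on \(\{|Y|\le M\}\) everything is bounded and the \(L_2\) rates apply.
\item Finally, conclude that \(\|\Delta\psi\|_{P,4} \to 0\) in probability, using the bounded terms together with dominated convergence. Cauchy--Schwarz then gives \(\|(\Delta\psi - \Delta\tau)(\hat\psi_k + \psi_0 - \hat\tau_k - \tau_0)\|_{P,2} = o_P(1)\), using \(\|\psi_0\|_{P,4} < \infty\).
\end{itemize}

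With both empirical-process remainders and \(B_k\) shown to be \(o_P(n^{-1/2})\) for each fixed \(k\), averaging over the \(K\) folds and multiplying by \(\sqrt{n}\) yields the stated expansion.
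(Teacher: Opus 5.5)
Your proposal is correct and follows the paper's proof. You use the same decomposition of $\hat\theta_k-\theta_0$ into oracle empirical processes, the common-nuisance drift $B_k$ (controlled by Lemma~\ref{lem:A1}), and a nuisance-replacement remainder that must be $o_P(n^{-1/2})$. The only difference is that you prove the replacement step yourself, via the variance bound conditional on $\mathcal{I}_{-k}$ and $L_4$-consistency of $\hat\psi_k$ by truncation under $\mathbb{E}[Y^4]<\infty$, whereas the paper cites the cross-fitting stochastic-equicontinuity result of Chernozhukov et al.\ (2018), so your version is more self-contained.
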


\begin{proof}
  For a given fold \(k\), the estimated target is \(\hat{\theta}_k = \hat{V}_{\mathrm{tot},k} - \hat{V}_{\mathrm{res},k}\). We decompose the error into an empirical process evaluated at the estimated nuisance parameters, plus the conditional drift \(B_k\)
  \[
    \hat{\theta}_k - \theta_0 = (\mathbb{P}_{n,k,\mathrm{tot}} - P_0)\big[(\hat{\psi}_k - \hat{\tau}_{\mathrm{ATE},k})^2\big] - (\mathbb{P}_{n,k,\mathrm{res}} - P_0)\big[(\hat{\psi}_k - \hat{\tau}_k)^2\big] + B_k.
  \]
  Because \(B_k = o_P(n^{-1/2})\) as in Lemma \ref{lem:A1} and the out-of-fold estimators \(\hat{\eta}_k\) are computed on the independent set \(\mathcal{I}_{-k}\), we invoke standard stochastic equicontinuity results for cross-fitted Double/Debiased Machine Learning by \citet{chernozhukov2018double}. Crucially, Assumption \ref{ass:nondegen}(i) ensures the outcome \(Y_i\) has bounded fourth moments and Assumption \ref{ass:nuisance_regularity}(ii) ensures the true and estimated nuisance functions are uniformly bounded; together these guarantee that the squared pseudo-outcomes possess a valid square-integrable envelope, satisfying the regularity conditions for cross-fitted empirical processes. Given the \(L_2\) consistency required by Assumption \ref{ass:nuisance_regularity}, substituting the estimated nuisance functions with their true probability limits inside the centered empirical process introduces only an \(o_P(n^{-1/2})\) remainder
  \[
    (\mathbb{P}_{n,k,\mathrm{tot}} - P_0)\big[(\hat{\psi}_k - \hat{\tau}_{\mathrm{ATE},k})^2\big] = (\mathbb{P}_{n,k,\mathrm{tot}} - P_0)\big[(\psi_0 - \tau_{\mathrm{ATE}})^2\big] + o_P(n^{-1/2}) = \mathbb{P}_{n,k,\mathrm{tot}}[\phi_{\mathrm{tot}}] + o_P(n^{-1/2})
  \]
  \[
    (\mathbb{P}_{n,k,\mathrm{res}} - P_0)\big[(\hat{\psi}_k - \hat{\tau}_k)^2\big] = (\mathbb{P}_{n,k,\mathrm{res}} - P_0)\big[(\psi_0 - \tau_0)^2\big] + o_P(n^{-1/2}) = \mathbb{P}_{n,k,\mathrm{res}}[\phi_{\mathrm{res}}] + o_P(n^{-1/2}).
  \]
  Averaging across the \(K\) fixed folds and multiplying by \(\sqrt{n}\) yields the stated linear expansion.
\end{proof}

\begin{lemma}
  \label{lem:A3}
  Let \(\sigma_{\mathrm{tot}}^2 = \Var(\phi_{\mathrm{tot}}(O))\) and \(\sigma_{\mathrm{res}}^2 = \Var(\phi_{\mathrm{res}}(O))\). Under Assumptions~\ref{ass:unconfound}--\ref{ass:nuisance_regularity}, the aggregated test statistic \(\hat{\theta}_{\mathrm{split}}\) satisfies
  \[
    \sqrt{n}(\hat{\theta}_{\mathrm{split}} - \theta_0) \xrightarrow{d} \mathcal{N}\big(0, 2\sigma_{\mathrm{tot}}^2 + 2\sigma_{\mathrm{res}}^2\big).
  \]
  Crucially, the asymptotic variance remains strictly positive even under the null hypothesis \(H_0: \theta_0 = 0\).
\end{lemma}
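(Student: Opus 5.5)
Starting from the linear expansion of Lemma~\ref{lem:A2}, it suffices to establish a central limit theorem for
\[
  S_n = \frac{\sqrt{n}}{K} \sum_{k=1}^K \left( \mathbb{P}_{n,k,\mathrm{tot}}[\phi_{\mathrm{tot}}] - \mathbb{P}_{n,k,\mathrm{res}}[\phi_{\mathrm{res}}] \right),
\]
since the $o_P(1)$ remainder is then absorbed by Slutsky's lemma. The key structural fact is that $\phi_{\mathrm{tot}}$ and $\phi_{\mathrm{res}}$ are fixed (oracle) functions, not depending on any estimated nuisance, and the $2K$ index sets $\mathcal{I}_{k,\mathrm{tot}}, \mathcal{I}_{k,\mathrm{res}}$ are mutually disjoint. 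The random partition is generated independently of the data, so conditional on it the summands are independent across all $n$ observations. I will rewrite $S_n$ as a single sum over $i=1,\dots,n$ of independent mean-zero terms: for $i \in \mathcal{I}_{k,\mathrm{tot}}$ the term is $\frac{\sqrt{n}}{K m}\phi_{\mathrm{tot}}(O_i)$, and for $i \in \mathcal{I}_{k,\mathrm{res}}$ it is $-\frac{\sqrt{n}}{K m}\phi_{\mathrm{res}}(O_i)$, with $m = n/(2K)$.

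Next I compute the variance. Because of disjointness there are no cross-covariances, in contrast to the same-sample case, where $\mathrm{Cov}(\phi_{\mathrm{tot}},\phi_{\mathrm{res}})$ would cancel the variance under $H_0$. Each fold contributes $\frac{n}{K^2}\big(\frac{\sigma^2_{\mathrm{tot}}}{m} + \frac{\sigma^2_{\mathrm{res}}}{m}\big) = \frac{2}{K}(\sigma^2_{\mathrm{tot}}+\sigma^2_{\mathrm{res}})$. Summing over $K$ folds gives exactly $2\sigma^2_{\mathrm{tot}} + 2\sigma^2_{\mathrm{res}}$. Equivalently, $S_n = \frac{1}{K}\sum_k \sqrt{2K}\,\big(\sqrt{m}\,\mathbb{P}_{n,k,\mathrm{tot}}\phi_{\mathrm{tot}} - \sqrt{m}\,\mathbb{P}_{n,k,\mathrm{res}}\phi_{\mathrm{res}}\big)$. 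Each of the $2K$ inner pieces converges by the classical Lindeberg--L\'evy CLT to $\mathcal{N}(0,\sigma^2_{\mathrm{tot}})$ or $\mathcal{N}(0,\sigma^2_{\mathrm{res}})$, and these limits are jointly independent. The continuous mapping theorem over a fixed finite number $2K$ of independent limits then yields the stated Gaussian limit.

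Finiteness of $\sigma^2_{\mathrm{tot}}$ and $\sigma^2_{\mathrm{res}}$ requires $\mathbb{E}[\psi_0^4]<\infty$. I would obtain this from $\mathbb{E}[Y^4]<\infty$ (Assumption~\ref{ass:nondegen}(i)), boundedness of $\mu_0$ and $\tau_0$ (Assumption~\ref{ass:nuisance_regularity}(ii)), and $e_0 \in [\xi,1-\xi]$ (Assumption~\ref{ass:overlap}), since $\psi_0$ is then bounded by a constant times $1+|Y|$.

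Strict positivity is the delicate point and the one I regard as the main (though mild) obstacle, because it must be shown without appealing to $\theta_0>0$. Here Assumption~\ref{ass:nondegen}(ii) gives $\sigma^2_{\mathrm{tot}} = \Var((\psi_0-\tau_{\mathrm{ATE}})^2) > c$ directly, so the limit variance is at least $2c>0$ regardless of $\theta_0$. Under $H_0$ one additionally has $\phi_{\mathrm{res}} = \phi_{\mathrm{tot}}$ almost surely, so the variance equals $4\sigma^2_{\mathrm{tot}}$, which I would remark on to contrast with the degenerate same-sample variance $\Var(\phi_{\mathrm{tot}}-\phi_{\mathrm{res}})=0$. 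The only other care needed is that $m$ may differ slightly across folds when $n$ is not divisible by $2K$. I would handle this by noting that $m_k/(n/2K)\to1$, so the weights converge and Slutsky applies.
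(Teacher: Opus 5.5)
Your proposal is correct and follows essentially the paper's proof: start from the Lemma~\ref{lem:A2} expansion, use disjointness of the evaluation halves so there is no cross-covariance, apply the CLT, and obtain positivity from Assumption~\ref{ass:nondegen}(ii), which gives the limit variance $4\sigma^2_{\mathrm{tot}}$ under $H_0$. The differences are cosmetic. The paper pools into the two global sets $\mathcal{I}_{\mathrm{tot}}$ and $\mathcal{I}_{\mathrm{res}}$ of size $n/2$ rather than your $2K$ per-fold pieces, and your check that $\mathbb{E}[\psi_0^4]<\infty$ and your treatment of unequal fold sizes fill in details the paper leaves implicit.
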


\begin{proof}
  Let \(\mathcal{I}_{\mathrm{tot}} = \bigcup_{k=1}^K \mathcal{I}_{k,\mathrm{tot}}\) and \(\mathcal{I}_{\mathrm{res}} = \bigcup_{k=1}^K \mathcal{I}_{k,\mathrm{res}}\). By the construction of Algorithm \ref{alg:ss_cvt}, these global sets completely partition the dataset such that \(\mathcal{I}_{\mathrm{tot}} \cap \mathcal{I}_{\mathrm{res}} = \emptyset\), and both sets have an identical size of \(n/2\). We rewrite the summation from Lemma \ref{lem:A2} as a scaled difference over these global sets
  \[
    \sqrt{n}(\hat{\theta}_{\mathrm{split}} - \theta_0) = \sqrt{2} \left[ \frac{1}{\sqrt{n/2}} \sum_{i \in \mathcal{I}_{\mathrm{tot}}} \phi_{\mathrm{tot}}(O_i) \right] - \sqrt{2} \left[ \frac{1}{\sqrt{n/2}} \sum_{j \in \mathcal{I}_{\mathrm{res}}} \phi_{\mathrm{res}}(O_j) \right] + o_P(1).
  \]
  By the Central Limit Theorem, the bracketed terms converge in distribution to \(\mathcal{N}(0, \sigma_{\mathrm{tot}}^2)\) and \(\mathcal{N}(0, \sigma_{\mathrm{res}}^2)\). Because the evaluation subsets \(\mathcal{I}_{\mathrm{tot}}\) and \(\mathcal{I}_{\mathrm{res}}\) are mutually disjoint, the two empirical processes evaluate statistically independent observations. Thus, their covariance is exactly zero, and the variance of their difference sums strictly to \(2\sigma_{\mathrm{tot}}^2 + 2\sigma_{\mathrm{res}}^2\).

  Under \(H_0: \theta_0 = 0\), the CATE is constant almost surely \(\tau_0(X) = \tau_{\mathrm{ATE}}\). Consequently, the true uncentered influence functions become perfectly identical: \(\phi_{\mathrm{tot}}(O) = \phi_{\mathrm{res}}(O)\) almost surely. If these components were evaluated on the same sample, the variance of their difference would degenerate identically to \(\Var(\phi_{\mathrm{tot}} - \phi_{\mathrm{tot}}) = 0\). However, because our algorithm strictly partitions the evaluation sets, the empirical covariance remains exactly zero, and the asymptotic limit variance strictly evaluates to \(4\sigma_{\mathrm{tot}}^2\). By the non-degeneracy condition in Assumption \ref{ass:nondegen}(ii), \(\Var\!\left((\psi_0(O) - \tau_{\mathrm{ATE}})^2\right) = \sigma_{\mathrm{tot}}^2 > c > 0\). Therefore, the asymptotic variance strictly bounds away from zero, formally resolving the null degeneracy. 
\end{proof}

\begin{proof}[Proof of Theorem \ref{thm:ss_cvt}]
  Equipped with the preceding lemmas, we now establish the final asymptotic validity of Algorithm \ref{alg:ss_cvt}.

  The standard error estimator \(\widehat{SE}^2\) proposed in Algorithm \ref{alg:ss_cvt} aggregates the sample variances within the paired sub-splits. Substituting \(m_k = n/(2K)\) and scaling the expression by \(n\)
  \[
    n \widehat{SE}^2 = n \frac{1}{K^2} \sum_{k=1}^K \left( \frac{\hat{\sigma}_{\mathrm{tot},k}^2}{n/(2K)} + \frac{\hat{\sigma}_{\mathrm{res},k}^2}{n/(2K)} \right) = \frac{2}{K} \sum_{k=1}^K \big( \hat{\sigma}_{\mathrm{tot},k}^2 + \hat{\sigma}_{\mathrm{res},k}^2 \big).
  \]
  Under Assumption \ref{ass:nondegen} and the \(L_2\) consistency of the nuisance parameters, the sample variances are weakly consistent for their population counterparts: \(\hat{\sigma}_{\mathrm{tot},k}^2 \xrightarrow{p} \sigma_{\mathrm{tot}}^2\) and \(\hat{\sigma}_{\mathrm{res},k}^2 \xrightarrow{p} \sigma_{\mathrm{res}}^2\). By the Weak Law of Large Numbers over the fixed \(K\) folds, \(n \widehat{SE}^2 \xrightarrow{p} 2\sigma_{\mathrm{tot}}^2 + 2\sigma_{\mathrm{res}}^2\).

  The standardized test statistic can be rewritten as
  \[
    Z_\theta = \frac{\sqrt{n}(\hat{\theta}_{\mathrm{split}} - \theta_0)}{\sqrt{n\widehat{SE}^2}}.
  \]
  By Lemma \ref{lem:A3}, the numerator converges in distribution to \(\mathcal{N}(0, 2\sigma_{\mathrm{tot}}^2 + 2\sigma_{\mathrm{res}}^2)\). The denominator converges in probability to the square root of that identical quantity. Applying Slutsky's Theorem yields the stated standard normal limit
  \[
    Z_\theta \xrightarrow{d} \mathcal{N}(0, 1).
  \]
  
  Under the null hypothesis \(H_0: \theta_0 = 0\), the standardized statistic simplifies to \(Z_\theta = \hat{\theta}_{\mathrm{split}} / \widehat{SE}\). Because the asymptotic Gaussian limit is valid on the boundary (as established by the non-degeneracy in Lemma \ref{lem:A3}), the one-sided test provides exact asymptotic size control
  \[
    \lim_{n \rightarrow \infty} P(Z_{\theta} > z_{1-\alpha} \mid H_0) = 1 - \Phi(z_{1-\alpha}) = \alpha.
  \]
  
  Under the alternative \(H_1: \theta_0 > 0\), the standardized test statistic decomposes into a centered distribution and a non-stochastic drift
  \[
    Z_\theta = \frac{\hat{\theta}_{\mathrm{split}}}{\widehat{SE}} = \frac{\hat{\theta}_{\mathrm{split}} - \theta_0}{\widehat{SE}} + \frac{\theta_0}{\widehat{SE}}.
  \]
  The first term converges to \(\mathcal{N}(0, 1)\) and is \(O_P(1)\). For the drift component, since \(\theta_0 > 0\) is a fixed positive constant and \(\widehat{SE} = O_P(n^{-1/2})\), we have:
  \[
    \frac{\theta_0}{\widehat{SE}} = \sqrt{n} \frac{\theta_0}{\sqrt{n\widehat{SE}^2}} \xrightarrow{p} +\infty.
  \]
  Therefore, the test statistic \(Z_\theta\) diverges to \(+\infty\) at a \(\sqrt{n}\)-rate. Consequently, the probability of rejecting the null hypothesis approaches 1, establishing asymptotic unit power against any fixed alternative
  \[
    \lim_{n \rightarrow \infty} P(Z_{\theta} > z_{1-\alpha} \mid H_1) = 1.
  \]
\end{proof}

\newpage
\section{Appendix: Asymptotic Behavior of the Naive DML Statistic \label{sec:appnaive}}

\setcounter{proposition}{0}
\renewcommand{\theproposition}{B\arabic{proposition}}
\setcounter{assumption}{0}
\renewcommand{\theassumption}{B\arabic{assumption}}
\setcounter{remark}{0}
\renewcommand{\theremark}{B\arabic{remark}}
\setcounter{equation}{0}
\renewcommand{\theHproposition}{B\arabic{proposition}}
\renewcommand{\theHassumption}{B\arabic{assumption}}
\renewcommand{\theHremark}{B\arabic{remark}}

This appendix characterizes the limiting behavior of the Naive DML benchmark introduced in Section \ref{sec:sim}. The benchmark is constructed to isolate the role of the Intra-Fold Sample-Split: it retains standard cross-fitting but evaluates the total and residual variance components on the \emph{same} observations. We show that, under the null hypothesis and an exact-rate condition on the DR-learner, this single change produces a standardized statistic that drifts to \(-\infty\), so the one-sided test's rejection probability converges to zero. This formalizes the conservative degeneracy described in Section \ref{sec:impasse} and accounts for the small, \(n\)-insensitive rejection rates in the Naive DML column of Table \ref{tbl:sim}. The result is proved for the feasible \(K\)-fold statistic by conditional moment arguments alone; no central limit theorem is invoked, so no condition on the dependence between folds induced by the shared training data is required.

For each fold \(k\), the benchmark trains a single set of out-of-fold nuisance estimators \(\hat{\eta}_k = (\hat{e}_k, \hat{\mu}_k)\), a DR-learner \(\hat{\tau}_k\), and an ATE estimator \(\hat{\tau}_{\mathrm{ATE},k}\) on \(\mathcal{I}_{-k}\), and evaluates both variance components on the entire evaluation fold \(\mathcal{I}_k\),
\[
  \hat{V}_{\mathrm{tot},k} = \frac{1}{n_k} \sum_{i \in \mathcal{I}_k} \big(\hat{\psi}_{i,k} - \hat{\tau}_{\mathrm{ATE},k}\big)^2, \qquad \hat{V}_{\mathrm{res},k} = \frac{1}{n_k} \sum_{i \in \mathcal{I}_k} \big(\hat{\psi}_{i,k} - \hat{\tau}_k(X_i)\big)^2.
\]
The point estimate and standard error are
\[
  \hat{\theta}_{\mathrm{naive}} = \frac{1}{K} \sum_{k=1}^K \big(\hat{V}_{\mathrm{tot},k} - \hat{V}_{\mathrm{res},k}\big), \qquad \widehat{SE}_{\mathrm{naive}} = \sqrt{\frac{1}{K^2} \sum_{k=1}^K \frac{\hat{\sigma}^2_{\phi,k}}{n_k}},
\]
the standardized statistic is \(Z_{\mathrm{naive}} = \hat{\theta}_{\mathrm{naive}} / \widehat{SE}_{\mathrm{naive}}\), where \(\hat{\sigma}^2_{\phi,k}\) is the sample variance over \(\mathcal{I}_k\) of the per-unit influence term
\begin{equation}
  \label{eq:naive_phi}
  \hat{\phi}_{i,k} = \big(\hat{\psi}_{i,k} - \hat{\tau}_{\mathrm{ATE},k}\big)^2 - \big(\hat{\psi}_{i,k} - \hat{\tau}_k(X_i)\big)^2,
\end{equation}
and the test rejects \(H_0\) when \(Z_{\mathrm{naive}} > z_{1-\alpha}\). As in the proof of Theorem \ref{thm:ss_cvt} in Appendix \ref{sec:appproofs}, we take the folds balanced, \(n_k = n/K\), without loss of generality.

\paragraph{Notation.} Fix a fold \(k\) and write, with all conditional expectations taken over a new observation \(O = (Y, D, X)\) independent of \(\mathcal{I}_{-k}\),
\begin{equation}
  \label{eq:naive_notation}
  \begin{aligned}
    \hat{h}_k(x)   & := \hat{\tau}_k(x) - \hat{\tau}_{\mathrm{ATE},k},                  & \qquad
    \Delta_{c,k}   & := \hat{\tau}_{\mathrm{ATE},k} - \tau_{\mathrm{ATE}},                        \\
    \delta_{n,k}   & := \|\hat{\tau}_k - \tau_0\|_{P,2},                                & \qquad
    \bar{b}_k      & := P_0\big[\hat{\tau}_k(X) - \tau_0(X) \mid \mathcal{I}_{-k}\big],           \\
    \rho_k         & := P_0\big[\hat{r}_k(X)\, \hat{h}_k(X) \mid \mathcal{I}_{-k}\big], &          &
  \end{aligned}
\end{equation}
where
\begin{equation}
  \label{eq:naive_rhat}
  \hat{r}_k(x) := \big(\hat{e}_k(x) - e_0(x)\big) \left[ \frac{\hat{\mu}_k(1,x) - \mu_0(1,x)}{\hat{e}_k(x)} + \frac{\hat{\mu}_k(0,x) - \mu_0(0,x)}{1 - \hat{e}_k(x)} \right]
\end{equation}
is the doubly robust product bias: as computed in the proof of Lemma \ref{lem:A1}, \(P_0[\hat{\psi}_{i,k} - \psi_0(O_i) \mid X_i, \mathcal{I}_{-k}] = \hat{r}_k(X_i)\). Finally let
\begin{gather*}
  \hat{v}_k(x) := \Var\big(\hat{\psi}_{i,k} \mid X_i = x, \mathcal{I}_{-k}\big), \\
  m_k := P_0\big[\hat{\phi}_{i,k} \mid \mathcal{I}_{-k}\big], \qquad s_k^2 := \Var\big(\hat{\phi}_{i,k} \mid \mathcal{I}_{-k}\big).
\end{gather*}

\begin{assumption}
  \label{ass:naive}
  In addition to Assumptions~\ref{ass:unconfound}--\ref{ass:nuisance_regularity}, the following hold.

  (i) Conditional moments: there exist constants \(\bar{M} < \infty\) and \(\sigma^2_{\min} > 0\) such that, almost surely, \(\mathbb{E}\big[(Y - \mu_0(D,X))^4 \mid D, X\big] \le \bar{M}\) and \(\sigma_d^2(X) = \Var(Y \mid D = d, X) \ge \sigma^2_{\min}\) for \(d \in \{0,1\}\).

  (ii) Exact learner rate: there is a deterministic sequence \(d_n \to 0\) with \(\delta_{n,k}/d_n \xrightarrow{p} 1\) for each \(k = 1, \dots, K\), and \(\sqrt{n}\, d_n \to \infty\). (The unit normalization is without loss of generality; \(\delta_{n,k} \asymp_P d_n\) suffices, with the obvious changes of constants.)

  (iii) Product-bias domination: \(\|\hat{r}_k\|_{P,2} = o_P(d_n)\) for each \(k\). Sufficient primitive conditions: the propensity score is known (then \(\hat{r}_k \equiv 0\), as in a randomized experiment), or \(\|\hat{e}_k - e_0\|_{P,2} = o_P(d_n)\), since \(|\hat{r}_k| \le (4C/\xi)\, |\hat{e}_k - e_0|\) on the event of Assumption~\ref{ass:nuisance_regularity}(ii).
\end{assumption}

Assumption~\ref{ass:naive}(ii) restricts attention to learners whose \(L_2\) error, while vanishing, decays more slowly than the parametric rate; this is the empirically relevant case for regularized machine learners and, as Remark~\ref{rem:boundary} explains, it is also the only regime in which the benchmark admits a one-signed asymptotic verdict under Assumptions~\ref{ass:unconfound}--\ref{ass:nuisance_regularity}. Note that (ii) together with Assumption~\ref{ass:nuisance_regularity}(i) \emph{implies} \(\Delta_{c,k}^2 = O_P(n^{-1}) = o_P(d_n^2)\): the dominance of the learner error over the ATE error is derived, not assumed.

\begin{proposition}[Degeneracy of the Naive DML Test]
  \label{prp:naive}
  Suppose Assumptions~\ref{ass:unconfound}--\ref{ass:nuisance_regularity} and~\ref{ass:naive} hold, and let \(H_0: \theta_0 = 0\) be true. Then:
  \begin{enumerate}
    \item[(i)] \emph{(Exact bias anatomy.)} For every fold \(k\),
          \begin{equation}
            \label{eq:naive_mean}
            m_k = P_0\big[\hat{\phi}_{i,k} \mid \mathcal{I}_{-k}\big] = -\,\delta_{n,k}^2 + \Delta_{c,k}^2 + 2\rho_k,
          \end{equation}
          an identity that uses only Assumptions~\ref{ass:unconfound}--\ref{ass:nondegen} and holds on the event of Assumption~\ref{ass:nuisance_regularity}(ii) (probability approaching one), on which all conditional moments above exist.
    \item[(ii)] \(\hat{\theta}_{\mathrm{naive}} = -\,d_n^2\big(1 + o_P(1)\big)\).
    \item[(iii)] \(n \widehat{SE}^2_{\mathrm{naive}} = S_n^2\big(1 + o_P(1)\big)\), where \(S_n^2 := K^{-1} \sum_{k=1}^K s_k^2\) obeys the two-sided bounds
          \[
            4\sigma^2_{\min} + o_P(1) \le \frac{S_n^2}{d_n^2} \le C^{\star} + o_P(1)
          \]
          for a finite constant \(C^{\star}\) depending only on \((\xi, C, \bar{M})\). In particular the naive standard error is of exact order \(d_n/\sqrt{n}\).
    \item[(iv)] Consequently
          \begin{gather*}
            Z_{\mathrm{naive}} = -\,\frac{\sqrt{n}\, d_n^2}{S_n}\big(1 + o_P(1)\big) \xrightarrow{p} -\infty, \\
            \lim_{n \to \infty} P\big(Z_{\mathrm{naive}} > z_{1-\alpha} \mid H_0\big) = 0 \quad \text{for every } \alpha \in (0,1).
          \end{gather*}
          The benchmark therefore never over-rejects asymptotically; its asymptotic size is zero, and the standardized statistic diverges at the rate \(\sqrt{n}\, d_n\).
  \end{enumerate}
\end{proposition}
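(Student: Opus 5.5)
The plan is to work fold by fold, conditionally on $\mathcal{I}_{-k}$. Given $\mathcal{I}_{-k}$, the nuisance estimators are fixed functions and the summands $\hat{\phi}_{i,k}$, $i\in\mathcal{I}_k$, are i.i.d. Everything will then follow from exact first and second conditional moment computations plus conditional Chebyshev bounds; no CLT is needed. Throughout I work on the event of Assumption~\ref{ass:nuisance_regularity}(ii).

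\emph{Step 1 (part (i)).} Using the identity from the proof of Lemma~\ref{lem:A1}, write
\[
\hat{\phi}_{i,k}=\hat{h}_k(X_i)\big(2\hat{\psi}_{i,k}-\hat{\tau}_k(X_i)-\hat{\tau}_{\mathrm{ATE},k}\big).
\]
Under $H_0$ we have $\tau_0\equiv\tau_{\mathrm{ATE}}$, and $P_0[\hat{\psi}_{i,k}\mid X_i,\mathcal{I}_{-k}]=\tau_{\mathrm{ATE}}+\hat{r}_k(X_i)$. Set $\Delta\tau=\hat{\tau}_k-\tau_0$. Then $\hat{h}_k=\Delta\tau-\Delta_{c,k}$, and $2\tau_{\mathrm{ATE}}-\hat{\tau}_k-\hat{\tau}_{\mathrm{ATE},k}=-(\Delta\tau+\Delta_{c,k})$. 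Iterating expectations over $X$ gives
\[
m_k=-P_0[(\Delta\tau-\Delta_{c,k})(\Delta\tau+\Delta_{c,k})]+2\rho_k=-\delta_{n,k}^2+\Delta_{c,k}^2+2\rho_k,
\]
which is exact.

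\emph{Step 2 (part (ii)).} Expand $\|\hat{h}_k\|_{P,2}^2=\delta_{n,k}^2-2\Delta_{c,k}\bar{b}_k+\Delta_{c,k}^2$. We have $\Delta_{c,k}=O_P(n^{-1/2})=o_P(d_n)$ and $|\bar{b}_k|\le\delta_{n,k}$, so $\|\hat{h}_k\|_{P,2}=d_n(1+o_P(1))$. Cauchy--Schwarz and Assumption~\ref{ass:naive}(iii) give $|\rho_k|\le\|\hat{r}_k\|_{P,2}\|\hat{h}_k\|_{P,2}=o_P(d_n^2)$. Together with $\delta_{n,k}^2=d_n^2(1+o_P(1))$, this yields $m_k=-d_n^2(1+o_P(1))$.

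The fold average $\hat{V}_{\mathrm{tot},k}-\hat{V}_{\mathrm{res},k}$ has conditional mean $m_k$ and conditional variance $s_k^2/n_k$. By Step 3, $s_k^2=O_P(d_n^2)$, so conditional Chebyshev gives a fluctuation of order $O_P(d_n/\sqrt{n})$. Since $\sqrt{n}\,d_n\to\infty$, this is $o_P(d_n^2)$. Averaging over the fixed $K$ folds gives (ii).

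\emph{Step 3 (part (iii)), the main obstacle.} I first need two-sided bounds on $s_k^2$, and then relative consistency of a sample variance whose target vanishes.

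For the lower bound, use the law of total variance: $s_k^2\ge P_0[\Var(\hat{\phi}_{i,k}\mid X,\mathcal{I}_{-k})]=4P_0[\hat{h}_k^2\hat{v}_k]$. Because $D(1-D)=0$,
\[
\hat{v}_k\ge\sigma_1^2e_0/\hat{e}_k^2+\sigma_0^2(1-e_0)/(1-\hat{e}_k)^2\ge\sigma^2_{\min}.
\]
The last inequality follows from Jensen's inequality applied to $x\mapsto x^{-2}$, since $e_0\hat{e}_k+(1-e_0)(1-\hat{e}_k)\le1$. This gives $s_k^2\ge4\sigma^2_{\min}d_n^2(1+o_P(1))$.

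For the upper bound, note $|\hat{h}_k|\le 2C$ and $|2\hat{\psi}_{i,k}-\hat{\tau}_k-\hat{\tau}_{\mathrm{ATE},k}|$ has bounded conditional fourth moments given $X$ (by $\bar{M}$, $\xi$, $C$). Hence $s_k^2\le P_0[\hat{\phi}_{i,k}^2]\le C^\star\|\hat{h}_k\|_{P,2}^2$. The same argument gives $P_0[\hat{\phi}_{i,k}^4\mid\mathcal{I}_{-k}]\le C'\|\hat{h}_k\|_{P,2}^2$, using $\hat{h}_k^4\le 4C^2\hat{h}_k^2$.

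For consistency of $\hat{\sigma}^2_{\phi,k}$, the uncentered part $n_k^{-1}\sum_i\hat{\phi}_{i,k}^2$ has conditional variance at most $C'd_n^2/n$. Its relative error to $P_0[\hat{\phi}^2\mid\mathcal{I}_{-k}]\asymp d_n^2$ is therefore $O_P\big((\sqrt{n}\,d_n)^{-1}\big)=o_P(1)$. The squared sample mean is $O_P(d_n^4)=o_P(d_n^2)$. Hence $\hat{\sigma}^2_{\phi,k}=s_k^2(1+o_P(1))$, and summing over folds gives $n\widehat{SE}^2_{\mathrm{naive}}=S_n^2(1+o_P(1))$. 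The delicate point is that every error must be controlled relative to $d_n^2$, not absolutely; this is exactly where $\sqrt{n}\,d_n\to\infty$ is used.

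\emph{Step 4 (part (iv)).} Dividing (ii) by the square root of (iii) gives
\[
Z_{\mathrm{naive}}=-\sqrt{n}\,d_n^2S_n^{-1}(1+o_P(1)).
\]
Since $S_n\le\sqrt{C^\star}\,d_n(1+o_P(1))$, we have $|Z_{\mathrm{naive}}|\gtrsim\sqrt{n}\,d_n\to\infty$ with negative sign. Hence $P(Z_{\mathrm{naive}}>z_{1-\alpha})\to0$ for every $\alpha\in(0,1)$.
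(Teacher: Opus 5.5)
Your proposal is correct and follows essentially the same route as the paper's proof: conditioning fold by fold on $\mathcal{I}_{-k}$, the factorization $\hat{\phi}_{i,k}=\hat{h}_k(X_i)\big(2\hat{\psi}_{i,k}-\hat{\tau}_k(X_i)-\hat{\tau}_{\mathrm{ATE},k}\big)$, the exact mean identity (your difference-of-squares evaluation is a slightly quicker way to see the $\Delta_{c,k}\bar{b}_k$ cancellation), two-sided bounds $s_k^2\asymp\|\hat{h}_k\|_{P,2}^2$ via $\hat{v}_k\ge\sigma^2_{\min}$ and a moment envelope, and conditional Chebyshev with all errors measured relative to $d_n^2$. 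One small repair: Assumption~\ref{ass:nuisance_regularity}(ii) does not bound $\hat{\tau}_{\mathrm{ATE},k}$, so $|\hat{h}_k|\le 2C$ is not justified on your working event; as the paper does, intersect that event with $\{|\Delta_{c,k}|\le 1\}$, which has probability approaching one, to get $|\hat{h}_k|\le 2C+1$; this only changes constants.
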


\begin{proof}
  Fix a fold \(k\); all constants \(C_1, C_2, \dots\) below depend only on \((\xi, C, \bar{M})\), and \(K\) is fixed. Since \(|\tau_0| \le C\) implies \(|\tau_{\mathrm{ATE}}| \le C\), the event
  \[
    \mathcal{E}_{n,k} := \Big\{ \hat{e}_k(X) \in [\xi, 1-\xi], \; \max\big(|\hat{\mu}_k(d,X)|, |\hat{\tau}_k(X)|\big) \le C \ \text{a.s.}, \; |\Delta_{c,k}| \le 1 \Big\}
  \]
  satisfies \(P(\mathcal{E}_{n,k}) \to 1\): the first two requirements hold with probability approaching one by Assumption~\ref{ass:nuisance_regularity}(ii), and the third holds because Assumption~\ref{ass:nuisance_regularity}(i) gives \(\Delta_{c,k} = O_P(n^{-1/2}) = o_P(1)\). Since every conclusion is a statement of convergence in probability, we may and do argue on \(\mathcal{E}_{n,k}\) throughout. On this event \(|\hat{\tau}_{\mathrm{ATE},k}| \le C + 1\), hence \(|\hat{h}_k| \le \bar{h} := 2C + 1\) and, by \eqref{eq:naive_rhat}, \(|\hat{r}_k| \le \bar{r} := 4C/\xi\) almost surely.

  \emph{Step 1 (conditional structure).} Let \(\varepsilon_{i,k} := \hat{\psi}_{i,k} - P_0[\hat{\psi}_{i,k} \mid X_i, \mathcal{I}_{-k}]\), so that \(P_0[\varepsilon_{i,k} \mid X_i, \mathcal{I}_{-k}] = 0\) and \(P_0[\varepsilon_{i,k}^2 \mid X_i, \mathcal{I}_{-k}] = \hat{v}_k(X_i)\). Since \(P_0[\psi_0 \mid X] = \tau_0(X) = \tau_{\mathrm{ATE}}\) under \(H_0\) and \(P_0[\hat{\psi}_{i,k} - \psi_0 \mid X_i, \mathcal{I}_{-k}] = \hat{r}_k(X_i)\),
  \[
    \hat{\psi}_{i,k} - \hat{\tau}_{\mathrm{ATE},k} = \varepsilon_{i,k} + \hat{r}_k(X_i) - \Delta_{c,k}.
  \]
  Since \(\hat{\psi}_{i,k} - \hat{\tau}_k(X_i) = (\hat{\psi}_{i,k} - \hat{\tau}_{\mathrm{ATE},k}) - \hat{h}_k(X_i)\), the influence term \eqref{eq:naive_phi} factorizes as
  \begin{equation}
    \label{eq:naive_factor}
    \hat{\phi}_{i,k} = 2\big(\hat{\psi}_{i,k} - \hat{\tau}_{\mathrm{ATE},k}\big) \hat{h}_k(X_i) - \hat{h}_k(X_i)^2 = 2\, \varepsilon_{i,k}\, \hat{h}_k(X_i) + q_k(X_i),
  \end{equation}
  with \(q_k(x) := 2\big(\hat{r}_k(x) - \Delta_{c,k}\big) \hat{h}_k(x) - \hat{h}_k(x)^2\), an \(\mathcal{I}_{-k}\)-measurable function of \(x\) alone.

  \emph{Step 2 (exact mean; part (i)).} Because
  \[
    P_0\big[\varepsilon_{i,k} \hat{h}_k(X_i) \mid \mathcal{I}_{-k}\big] = P_0\big[\hat{h}_k(X)\, P_0[\varepsilon_{i,k} \mid X, \mathcal{I}_{-k}]\big] = 0,
  \]
  we have \(m_k = P_0[q_k \mid \mathcal{I}_{-k}]\). Under \(H_0\), \(\hat{h}_k = (\hat{\tau}_k - \tau_0) - \Delta_{c,k}\), so
  \[
    P_0[\hat{h}_k \mid \mathcal{I}_{-k}] = \bar{b}_k - \Delta_{c,k}, \qquad P_0[\hat{h}_k^2 \mid \mathcal{I}_{-k}] = \delta_{n,k}^2 - 2\Delta_{c,k}\bar{b}_k + \Delta_{c,k}^2.
  \]
  Therefore
  \[
    m_k = 2\rho_k - 2\Delta_{c,k}\big(\bar{b}_k - \Delta_{c,k}\big) - \big(\delta_{n,k}^2 - 2\Delta_{c,k}\bar{b}_k + \Delta_{c,k}^2\big) = -\,\delta_{n,k}^2 + \Delta_{c,k}^2 + 2\rho_k,
  \]
  which is \eqref{eq:naive_mean}; the terms in \(\Delta_{c,k}\bar{b}_k\) cancel exactly. Only integrability and the conditional-bias formula were used, so (i) holds under Assumptions~\ref{ass:unconfound}--\ref{ass:nondegen} alone on the event of Assumption~\ref{ass:nuisance_regularity}(ii).

  \emph{Step 3 (drift).} By Jensen's inequality \(|\bar{b}_k| \le \delta_{n,k}\), and by Assumption~\ref{ass:nuisance_regularity}(i) and \(\sqrt{n}\, d_n \to \infty\) we have \(\Delta_{c,k} = O_P(n^{-1/2}) = o_P(d_n)\). Hence, using Assumption~\ref{ass:naive}(ii),
  \begin{equation}
    \label{eq:naive_hnorm}
    \|\hat{h}_k\|_{P,2}^2 = \delta_{n,k}^2 - 2\Delta_{c,k}\bar{b}_k + \Delta_{c,k}^2 = d_n^2\big(1 + o_P(1)\big).
  \end{equation}
  By Cauchy--Schwarz and Assumption~\ref{ass:naive}(iii), \(|\rho_k| \le \|\hat{r}_k\|_{P,2} \|\hat{h}_k\|_{P,2} = o_P(d_n) \cdot O_P(d_n) = o_P(d_n^2)\), while \(\Delta_{c,k}^2 = O_P(n^{-1}) = o_P(d_n^2)\). Combining with (i),
  \begin{equation}
    \label{eq:naive_mk}
    m_k = -\,d_n^2\big(1 + o_P(1)\big).
  \end{equation}

  \emph{Step 4 (two-sided conditional variance bounds).} Since \(P_0[\varepsilon_{i,k} \hat{h}_k (q_k - m_k) \mid \mathcal{I}_{-k}] = P_0[\hat{h}_k (q_k - m_k)\, P_0[\varepsilon_{i,k} \mid X, \mathcal{I}_{-k}]] = 0\), the decomposition \eqref{eq:naive_factor} gives
  \begin{equation}
    \label{eq:naive_svar}
    s_k^2 = 4\, P_0\big[\hat{v}_k(X) \hat{h}_k(X)^2 \mid \mathcal{I}_{-k}\big] + \Var\big(q_k(X) \mid \mathcal{I}_{-k}\big).
  \end{equation}
  For the lower bound, the law of total variance and \(\hat{e}_k, 1 - \hat{e}_k \le 1\) give, on \(\mathcal{E}_{n,k}\),
  \begin{align*}
    \hat{v}_k(X) & \ge P_0\big[\Var(\hat{\psi}_{i,k} \mid D, X) \mid X\big]                                                                                            \\
                 & = e_0(X)\, \frac{\sigma_1^2(X)}{\hat{e}_k(X)^2} + \big(1 - e_0(X)\big)\, \frac{\sigma_0^2(X)}{(1 - \hat{e}_k(X))^2} \;\ge\; \sigma^2_{\min},
  \end{align*}
  so \(s_k^2 \ge 4\sigma^2_{\min} \|\hat{h}_k\|_{P,2}^2\). For the upper bound, \(|\hat{\psi}_{i,k}| \le 2C + (|Y - \mu_0(D,X)| + 2C)/\xi\) on \(\mathcal{E}_{n,k}\) and \(\mathbb{E}[(Y - \mu_0(D,X))^2 \mid D, X] \le \bar{M}^{1/2}\) yield \(\hat{v}_k(X) \le P_0[\hat{\psi}_{i,k}^2 \mid X, \mathcal{I}_{-k}] \le C_1\); moreover \(|q_k| \le \big(2(\bar{r} + 1) + \bar{h}\big) |\hat{h}_k|\) on \(\mathcal{E}_{n,k}\) (using \(|\Delta_{c,k}| \le 1\) there), so \(\Var(q_k \mid \mathcal{I}_{-k}) \le P_0[q_k^2 \mid \mathcal{I}_{-k}] \le C_2 \|\hat{h}_k\|_{P,2}^2\). Hence, with \(C_3 := 4C_1 + C_2\),
  \begin{equation}
    \label{eq:naive_svar_bounds}
    4\sigma^2_{\min}\, \|\hat{h}_k\|_{P,2}^2 \;\le\; s_k^2 \;\le\; C_3\, \|\hat{h}_k\|_{P,2}^2,
  \end{equation}
  and by \eqref{eq:naive_hnorm} both bounds are of exact order \(d_n^2\).

  \emph{Step 5 (fluctuations; part (ii)).} Let \(A_k := n_k^{-1} \sum_{i \in \mathcal{I}_k} (\hat{\phi}_{i,k} - m_k)\), so that \(P_0[A_k \mid \mathcal{I}_{-k}] = 0\) and \(\Var(A_k \mid \mathcal{I}_{-k}) = s_k^2 / n_k\), the observations in \(\mathcal{I}_k\) being i.i.d.\ and independent of \(\mathcal{I}_{-k}\). By conditional Chebyshev, \eqref{eq:naive_svar_bounds}, \eqref{eq:naive_hnorm} and \(P(\mathcal{E}_{n,k}) \to 1\), for every \(\eta > 0\)
  \[
    P\big(|A_k| > \eta\, d_n^2\big) \le \mathbb{E}\left[ \min\left\{ 1, \frac{s_k^2}{n_k\, \eta^2 d_n^4} \right\} \right] + o(1) \le \frac{2 C_3 K}{\eta^2\, n\, d_n^2} + o(1) \longrightarrow 0,
  \]
  using \(n\, d_n^2 \to \infty\). Thus \(A_k = o_P(d_n^2)\) for each of the \(K\) folds separately --- no joint moment across folds is required --- and with \eqref{eq:naive_mk},
  \[
    \hat{\theta}_{\mathrm{naive}} = \frac{1}{K} \sum_{k=1}^K \big(m_k + A_k\big) = -\,d_n^2\big(1 + o_P(1)\big),
  \]
  which is (ii). Note this already implies \(P(\hat{\theta}_{\mathrm{naive}} \ge 0) \to 0\) and hence, since \(\widehat{SE}_{\mathrm{naive}} > 0\) and \(z_{1-\alpha} > 0\), the size conclusion in (iv) --- the remaining steps sharpen this to the stated rate.

  \emph{Step 6 (standard error; part (iii)).} Write \(\hat{\sigma}_{\phi,k}^2 = n_k^{-1} \sum_{i \in \mathcal{I}_k} \hat{\phi}_{i,k}^2 - (m_k + A_k)^2\) (the \(n_k/(n_k - 1)\) correction is immaterial). On \(\mathcal{E}_{n,k}\), \(\mathbb{E}[\varepsilon_{i,k}^4 \mid X_i, \mathcal{I}_{-k}] \le C_4(1 + \bar{M})\) by the same envelope as in Step~4, so from \eqref{eq:naive_factor}, \(|\hat{h}_k| \le \bar{h}\) and \(|q_k| \le C_5\),
  \[
    P_0\big[\hat{\phi}_{i,k}^4 \mid \mathcal{I}_{-k}\big] \le 8\Big( 16\, \bar{h}^2\, C_4(1 + \bar{M})\, \|\hat{h}_k\|_{P,2}^2 + C_5^2\, P_0[q_k^2 \mid \mathcal{I}_{-k}] \Big) \le C_6\, \|\hat{h}_k\|_{P,2}^2.
  \]
  Since \(P_0[\hat{\phi}_{i,k}^2 \mid \mathcal{I}_{-k}] = s_k^2 + m_k^2\), conditional Chebyshev gives
  \[
    n_k^{-1} \sum_{i \in \mathcal{I}_k} \hat{\phi}_{i,k}^2 = s_k^2 + m_k^2 + O_P\Big( \sqrt{C_6\, d_n^2 / n_k} \Big) = s_k^2 + o_P(d_n^2),
  \]
  because \(m_k^2 = O_P(d_n^4) = o_P(d_n^2)\) and \(d_n/\sqrt{n} = o_P(d_n^2)\) by \(\sqrt{n}\, d_n \to \infty\). Likewise \((m_k + A_k)^2 = O_P(d_n^4) = o_P(d_n^2)\). Hence \(\hat{\sigma}_{\phi,k}^2 = s_k^2 + o_P(d_n^2) = s_k^2\big(1 + o_P(1)\big)\), the last step by the lower bound in \eqref{eq:naive_svar_bounds}. With \(n_k = n/K\),
  \[
    n \widehat{SE}^2_{\mathrm{naive}} = \frac{1}{K} \sum_{k=1}^K \hat{\sigma}_{\phi,k}^2 = S_n^2\big(1 + o_P(1)\big),
  \]
  and the stated bounds on \(S_n^2 / d_n^2\) follow from \eqref{eq:naive_svar_bounds} and \eqref{eq:naive_hnorm} with \(C^{\star} := C_3\).

  \emph{Step 7 (conclusion; part (iv)).} Combining (ii) and (iii),
  \[
    Z_{\mathrm{naive}} = \frac{\sqrt{n}\, \hat{\theta}_{\mathrm{naive}}}{\sqrt{n \widehat{SE}^2_{\mathrm{naive}}}} = -\,\frac{\sqrt{n}\, d_n^2}{S_n}\big(1 + o_P(1)\big),
  \]
  and hence, by the upper bound in (iii),
  \[
    -\,Z_{\mathrm{naive}} \;\ge\; \frac{\sqrt{n}\, d_n}{\sqrt{C^{\star}}}\, \big(1 + o_P(1)\big) \xrightarrow{p} \infty,
  \]
  so \(Z_{\mathrm{naive}} \xrightarrow{p} -\infty\) and, since \(z_{1-\alpha}\) is fixed, \(P(Z_{\mathrm{naive}} > z_{1-\alpha}) \to 0\), proving (iv).
\end{proof}

\begin{remark}[The boundary regime \(\sqrt{n}\, d_n = O(1)\), and why no Gaussian limit is claimed]
  \label{rem:boundary}
  Assumption~\ref{ass:naive}(ii) excludes learners at or beyond the parametric rate, and this exclusion is essential rather than technical. When \(\sqrt{n}\, \delta_{n,k} = O_P(1)\), all three terms of the exact identity \eqref{eq:naive_mean} are of the same order \(n^{-1}\) as the statistic's conditional standard deviation \(s_k/\sqrt{n_k}\), so the standardized location is the \(O_P(1)\) random quantity \(\sqrt{n_k}\, \big(-\delta_{n,k}^2 + \Delta_{c,k}^2 + 2\rho_k\big)/s_k\), whose sign is not determined by \(H_0\): the limit of \(Z_{\mathrm{naive}}\) is a normal location \emph{mixture} driven by the training-fold randomness in \(\Delta_{c,k}\) and \(\rho_k\), not \(\mathcal{N}(-\kappa, 1)\) for a constant \(\kappa\). Conservativeness can then fail. Two mechanisms illustrate this. First, if \(\hat{\tau}_k \equiv \tau_{\mathrm{ATE}}\) exactly while \(\hat{\tau}_{\mathrm{ATE},k}\) is a regular AIPW estimate, then \(m_k = \Delta_{c,k}^2 > 0\) and the standardized statistic converges to \(|W|/(2\bar{\sigma}) + \mathcal{N}(0,1)\)-type limits with \(W\) Gaussian, so the one-sided test over-rejects mildly. Second, without Assumption~\ref{ass:naive}(iii) the cross term \(2\rho_k\) can be made positive and of larger order than \(\delta_{n,k}^2\) by nuisance-estimator sequences that satisfy Assumptions~\ref{ass:unconfound}--\ref{ass:nuisance_regularity} (correlated, spiked errors of \(\hat{e}_k\), \(\hat{\mu}_k\) and \(\hat{\tau}_k\) on a common small region), in which case \(Z_{\mathrm{naive}} \xrightarrow{p} +\infty\) and the benchmark over-rejects with probability tending to one. Both mechanisms are artifacts of evaluating the two losses on the same observations; neither arises for Algorithm~\ref{alg:ss_cvt}.
\end{remark}

\begin{remark}[Reading Table~\ref{tbl:sim}, and the contrast with Algorithm~\ref{alg:ss_cvt}]
  \label{rem:contrast}
  For regularized machine learners under the null, the \(L_2\) rate typically satisfies \(\sqrt{n}\, d_n \to \infty\), if only logarithmically (e.g.\ \(d_n \asymp \sqrt{s \log p / n}\) for \(\ell_1\)-regularized DR-learners), so Proposition~\ref{prp:naive}(iv) applies: \(Z_{\mathrm{naive}}\) drifts to \(-\infty\) at the slow rate \(\sqrt{n}\, d_n\) and the rejection probability decays to zero correspondingly slowly. This is consistent with the small and nearly \(n\)-insensitive rejection rates of the Naive DML column of Table~\ref{tbl:sim} over the moderate range \(n \in [250, 2000]\).

  The contrast with the Intra-Fold Sample-Split is instructive, and it lies entirely in the denominator. Under \(H_0\) and Assumptions~\ref{ass:unconfound}--\ref{ass:nuisance_regularity} and~\ref{ass:naive}, Lemma~\ref{lem:A1} shows that the split estimator carries the \emph{same} leading drift in its conditional mean, \(\hat{\theta}_{\mathrm{split}} = -d_n^2(1 + o_P(1)) + O_P(n^{-1/2})\)-fluctuations, since the dominant remainder in \(B_k\) is \(-P_0[\Delta \tau^2 \mid \mathcal{I}_{-k}]\). But by Lemma~\ref{lem:A3} and Assumption~\ref{ass:nondegen}(ii) the split statistic's standard error is of exact order \(n^{-1/2}\) and bounded below on the boundary, so the standardized drift is \(-\sqrt{n}\, d_n^2 / \sqrt{2\sigma_{\mathrm{tot}}^2 + 2\sigma_{\mathrm{res}}^2}\), which vanishes because \(d_n = o(n^{-1/4})\) under Assumption~\ref{ass:nuisance_regularity}(i); hence \(Z_{\mathrm{split}} \xrightarrow{d} \mathcal{N}(0,1)\). The benchmark instead normalizes by a standard error of order \(d_n\, n^{-1/2}\) (Proposition~\ref{prp:naive}(iii)), so the same drift is inflated by the factor \(1/d_n\) and the statistic degenerates: \(d_n^2 \big/ \big(d_n/\sqrt{n}\big) = \sqrt{n}\, d_n \to \infty\). The surviving factor of \(d_n\) in the denominator is exactly what the Intra-Fold split removes, converting a degenerate, drifting statistic into an asymptotically pivotal one.
\end{remark}

\end{document}